\documentclass[11pt]{article}
\usepackage[utf8]{inputenc}
\usepackage{hyperref}
\usepackage{fullpage}
\usepackage{amssymb}
\usepackage{mathtools}
\usepackage{amsthm}
\usepackage{eucal}
\usepackage{dsfont}
\usepackage[T1]{fontenc}
\usepackage{lmodern}
\usepackage[stretch=10,shrink=10]{microtype}
\usepackage[capitalise]{cleveref}
\usepackage{color,soul}

\allowdisplaybreaks[1]

\theoremstyle{plain}
\newtheorem{theorem}{Theorem}[section]
\newtheorem{lemma}[theorem]{Lemma}

\newtheorem{cor}[theorem]{Corollary}

\theoremstyle{definition}
\newtheorem{definition}[theorem]{Definition}

\newtheorem{fact}[theorem]{Fact}

\newcommand {\eps} {\varepsilon}
\newcommand {\br} [1] {\ensuremath{ \left( #1 \right) }}
\newcommand {\Br} [1] {\ensuremath{ \left[ #1 \right] }}

\newcommand {\minusspace} {\: \! \!}

\newcommand {\fn} [2] {\ensuremath{ #1 \minusspace \br{ #2 } }}
\newcommand {\Fn} [2] {\ensuremath{ #1 \minusspace \Br{ #2 } }}

\newcommand {\defeq} {\ensuremath{ \stackrel{\mathrm{def}}{=} }}

\newcommand {\prob} [1] {\Fn{\Pr}{#1}}

\newcommand {\norm} [1] {\ensuremath{ \left\| #1 \right\| }}
\newcommand {\normsub} [2] {\ensuremath{ \norm{#1}_{#2} }}
\newcommand {\onenorm} [1] {\normsub{#1}{1}}

\DeclareMathOperator*{\bigE}{\mathbb{E}}
\newcommand {\expec} [2] {\Fn{\bigE_{\substack{#1}}}{#2}}
\newcommand {\email} [1] {\href{mailto:#1}{\texttt{#1}}}

\newcommand {\finset} [1] {\ensuremath{\CMcal{#1}}}
\newcommand {\bra} [1] {\ensuremath{ \left\langle #1 \right| }}
\newcommand {\ket} [1] {\ensuremath{ \left| #1 \right\rangle }}
\newcommand {\ketbratwo} [2] {\ensuremath{ \left| #1 \middle\rangle \middle\langle #2 \right| }}
\newcommand {\ketbra} [1] {\ketbratwo{#1}{#1}}
\newcommand {\cspace} [1] {\ensuremath{\mathnormal{#1}}}

\newcommand {\Tr} {\ensuremath{ \mathrm{Tr} }}
\newcommand {\partrace} [2] {\fn{\Tr_{#1}}{#2}}

\newcommand {\suppress}[1]{}

\newcommand {\set} [1] {\ensuremath{ \left\lbrace #1 \right\rbrace }}
\newcommand {\reg} [1] {\ensuremath{ \mathnormal{#1} }}

\def\C{\mathcal{C}}
\def\D{\CMcal{D}}

\def\R{\mathcal{R}}
\def\G{\CMcal{G}}
\def\P{\mathcal{P}}
\def\Q{\CMcal{Q}}

\def\T{\CMcal{T}}

\def\H{\mathcal{H}}

\def\F{\mathrm{F}}

\def\D{\mathrm{D}}
\def\E{\mathcal{E}}

\newcommand{\braket}[2]{\langle#1|#2\rangle}

\newcommand {\mytitle} {An upper bound on quantum capacity of unital channels}
\newcommand {\Anurag}  {Anurag Anshu}

\newcommand {\CQT} {Centre for Quantum Technologies}
\newcommand {\NUS} {National University of Singapore}

\newcommand {\authorblock} [3] {
	\begin{minipage}[t]{0.3\linewidth}
		\centering
		{#1}\\[0.8ex]
		{\footnotesize {#2}\\[-0.7ex]
		\email{#3}}
	\end{minipage}\vspace{1ex}
}

\hypersetup{
	pdfstartview={FitH},
	pdfdisplaydoctitle={true},
	breaklinks={true},
	bookmarksopen={true},
	bookmarksnumbered={false},
	pdftitle={\mytitle},
	pdfauthor={\Anurag}
}

\begin{document}

\title{\textbf{\mytitle}\\[2ex]}

\author{
    \authorblock{\Anurag}{\CQT, \NUS}{a0109169@u.nus.edu}
}

\maketitle

\begin{abstract}
We analyze the quantum capacity of a unital quantum channel, using ideas from the proof of near-optimality of Petz recovery map [Barnum and Knill 2000] and give an upper bound on the quantum capacity in terms of regularized output $2$-norm of the channel. We also show that any code attempting to exceed this upper bound must incur large error in decoding, which can be viewed as a weaker version of the strong converse results for quantum capacity. As an application, we find nearly matching upper and lower bounds (up to an additive constant) on the quantum capacity of quantum expander channels. Using these techniques, we further conclude that the `mixture of random unitaries' channels arising in the construction of quantum expanders in [Hastings 2007] show a trend in multiplicativity of output $2$-norm similar to that exhibited in [Montanaro 2013] for output $\infty$-norm of random quantum channels.
\end{abstract}

\section{Introduction}
One of the most fundamental developments in quantum information theory has been towards an understanding of various capacities of quantum channels. Quantum capacity of a quantum channel is characterized by a well known quantity called the coherent information (\cite{Nielsen96}). Similarly, classical capacity of a quantum channel is characterized by its Holevo information \cite{holevo73}. Unfortunately, a single letter formula for either the quantum capacity or the classical capacity is not known, and a regularization is needed to completely capture these capacities \cite{smith10}. 

 It was shown by Shor \cite{Shor2004} that the problem of regularization of Holevo information is related to various other additivity questions in quantum information and in particular to additivity of the minimum entropy output of a quantum channel. This was combined with an extensive study of multiplicativity of output norms of quantum channels (we discuss output $2$-norm and output $\infty$-norm in Section \ref{prelims}, general definition can be found in following references). Violations of multiplicativity of various output norms were shown in a series of results \cite{HolevoWerner02,HaydenWinter08,Cubitt08}, culminating in a proof of violation of additivity of minimum entropy output by Hastings \cite{Hastings09}. The work \cite{Montanaro13} studied the output $\infty$-norm of a random quantum channel, where it was shown that most quantum channels still satisfied a weaker version of the multiplicativity of output $\infty$-norm (Theorem $3$ in the reference \cite{Montanaro13}). 

In this work, we primarily consider the quantum capacities of unital channels and their output $2$-norms. We provide an upper bound on quantum capacity of such channels in terms of their regularized output $2$-norm. In addition, we prove a result that is reminiscent of the `strong converse theorems', which have received a great deal of attention in recent literature on quantum channel capacity (see for example, \cite{WarsiSharma13,Wilde2014,TomamichelWildeWinter14,WildeWinter2014,Gupta2015,MorganWinter14,Cooney2016} and references therein). 

\subsection*{Results and techniques}
We provide an upper bound on the quantum capacity and the zero error classical capacity of a quantum channel (Lemma \ref{dimofcodespace} for quantum capacity of a general channel, Corollary \ref{cor:unital} for quantum capacity of a unital channel and Lemma \ref{zeroerror} for zero error classical capacity of a general channel). Our bound is inspired from the near-optimality of Petz recovery map due to Barnum and Knill \cite{barnumknill}, which has been well studied in literature, such as for approximate quantum error correction\cite{huikhoon} and achievability results in quantum channel capacity \cite{Dutta15}. Using this bound, we derive an upper bound on quantum capacity of unital channels and also a weak form of strong converse theorem for quantum capacity:  for any encoding-decoding operation that attempts to exceed the upper bound on quantum capacity,  the success fidelity of decoding the quantum message falls exponentially in number of channel uses (Theorem \ref{main:theo}). 

As an application, we consider the well studied quantum expander channels (various constructions of which have been presented in \cite{Ambianis04, Has07, Har07,Gross07, HasHar08}), and in particular, the mixture of random unitaries as defined in \cite{Has07}. We find an upper and a lower bound on quantum capacities of such random channels, and show that with high probability, the upper and lower bounds differ by a small constant (Lemma \ref{lem:expanders} and Corollary \ref{cor:hastingsexpanders}). Moreover, along the lines of the result shown in \cite{Montanaro13}, we find that the output $2$-norm of such channels is nearly multiplicative (with high probability), with the multiplicativity exponent close to $1$ (Corollary \ref{cor:hastingsexpanders}).

\section{Preliminaries}
\label{prelims}
For integer $n \geq 1$, let $[n]$ represent the set $\{1,2, \ldots, n\}$. Let $\mathbb{R}$ represent the set of real numbers. We let $\log$ represent logarithm to the base $2$ and $\ln$ represent logarithm to the base $\mathrm{e}$. 
\suppress{Let $\finset{X}$ and $\finset{Y}$ be finite sets. $\finset{X}\times\finset{Y}$ represents the cross product of $\finset{X}$ and $\finset{Y}$. For a natural number $k$, we let 
$\finset{X}^k$ denote the set $\finset{X}\times\cdots\times\finset{X}$, the cross product of
$\finset{X}$, $k$ times.  Let $\mu$ be a probability distribution on $\finset{X}$. We let $\mu(x)$ represent the probability of $x\in\finset{X}$ according to $\mu$. We use the same symbol to represent a random variable and its distribution whenever it is clear from the context. The expectation value of function $f$ on $\finset{X}$ is defined as
$\expec{x \leftarrow X}{f(x)} \defeq \sum_{x \in \finset{X}} \prob{X=x}
\cdot f(x),$ where $x\leftarrow X$ means that $x$ is drawn according to distribution $X$.}

Consider a finite dimensional Hilbert space $\H$ endowed with an inner product $\langle \cdot, \cdot \rangle$ (In this paper, we only consider finite dimensional Hilbert-spaces). The $\ell_1$ norm of an operator $X$ on $\H$ is $\onenorm{X}\defeq\Tr\sqrt{X^{\dag}X}$ and $\ell_2$ norm is $\norm{X}_2\defeq\sqrt{\Tr XX^{\dag}}$. A quantum state (or a density matrix or a state) is a positive semi-definite matrix on $\H$ with trace equal to $1$. It is called {\em pure} if and only if its rank is $1$. A sub-normalized state is a positive semi-definite matrix on $\H$ with trace less than or equal to $1$. Let $\ket{\psi}$ be a unit vector on $\H$, that is $\langle \psi,\psi \rangle=1$.  With some abuse of notation, we use $\psi$ to represent the state and also the density matrix $\ketbra{\psi}$, associated with $\ket{\psi}$. Given a quantum state $\rho$ on $\H$, {\em support of $\rho$}, called $\text{supp}(\rho)$ is the subspace of $\H$ spanned by all eigen-vectors of $\rho$ with non-zero eigenvalues.
 
A {\em quantum register} $A$ is associated with some Hilbert space $\H_A$. Define $|A| \defeq \dim(\H_A)$. Let $\mathcal{L}(A)$ represent the set of all linear operators on $\H_A$. We denote by $\mathcal{D}(A)$, the set of quantum states on the Hilbert space $\H_A$. State $\rho$ with subscript $A$ indicates $\rho_A \in \mathcal{D}(A)$. If two registers $A,B$ are associated with the same Hilbert space, we shall represent the relation by $A\equiv B$.  Composition of two registers $A$ and $B$, denoted $AB$, is associated with Hilbert space $\H_A \otimes \H_B$.  For two quantum states $\rho\in \mathcal{D}(A)$ and $\sigma\in \mathcal{D}(B)$, $\rho\otimes\sigma \in \mathcal{D}(AB)$ represents the tensor product (Kronecker product) of $\rho$ and $\sigma$. The identity operator on $\H_A$ (and associated register $A$) is denoted $I_A$. 

Let $\rho_{AB} \in \mathcal{D}(AB)$. We define
\[ \rho_{\reg{B}} \defeq \partrace{\reg{A}}{\rho_{AB}}
\defeq \sum_i (\bra{i} \otimes I_{\cspace{B}})
\rho_{AB} (\ket{i} \otimes I_{\cspace{B}}) , \]
where $\set{\ket{i}}_i$ is an orthonormal basis for the Hilbert space $\H_A$.
The state $\rho_B\in \mathcal{D}(B)$ is referred to as the marginal state of $\rho_{AB}$. Unless otherwise stated, a missing register from subscript in a state will represent partial trace over that register. Given a $\rho_A\in\mathcal{D}(A)$, a {\em purification} of $\rho_A$ is a pure state $\rho_{AB}\in \mathcal{D}(AB)$ such that $\partrace{\reg{B}}{\rho_{AB}}=\rho_A$. Purification of a quantum state is not unique.

A quantum map $\E: A\rightarrow B$ is a completely positive linear map (mapping states in $\mathcal{D}(A)$ to states in $\mathcal{D}(B)$). In this work, we shall also consider maps that do not preserve trace. Trace preserving quantum maps shall be referred to as \textit{quantum channels}.  A {\em unitary} operator $U_A:\H_A \rightarrow \H_A$ is such that $U_A^{\dagger}U_A = U_A U_A^{\dagger} = I_A$. An {\em isometry}  $V:\H_A \rightarrow \H_B$ is such that $V^{\dagger}V = I_A$ and $VV^{\dagger} = I_B$. The set of all unitary operations on register $A$ is  denoted by $\mathcal{U}(A)$. A quantum channel $\E:A\rightarrow A$ is said to be \textit{unital} if it holds that $\E(I_A)=I_A$.

Given a quantum map $\E: A\rightarrow B$, maximum output $\infty$-norm of $\E$ is defined as $\|\E\|_{\infty}\defeq\text{max}_{\rho\in \mathcal{D}(A)}\{\|\E(\rho)\|\}$. Here, $\|.\|$ is the operator norm. We say that $\E$ obeys $\infty$-norm multiplicativity with exponent $\alpha$ if $\|\E^{\otimes n}\|_{\infty}\leq \|\E\|_{\infty}^{n\alpha}$. Similarly, maximum output $2$-norm of $\E$ is defined as $\|\E\|_2\defeq \text{max}_{\rho\in \mathcal{D}(A)}\{\Tr(\E^2(\rho))\}$. We say that $\E$ obeys $2$-norm multiplicativity with exponent $\alpha$ if $\|\E^{\otimes n}\|_{2}\leq \|\E\|_{2}^{n\alpha}$. 

Following fact says that the optimization in the definition of $\|\E\|_2$ is achieved by a pure state. 
\begin{fact}
\label{pureopt}
For every state $\rho$, there exists a pure state $\ket{\sigma}$ such that $\Tr(\E^2(\rho)) \leq \Tr(\E^2(\sigma))$.
\end{fact}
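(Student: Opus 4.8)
The plan is to recognize $\rho \mapsto \Tr(\E^2(\rho))$ as a convex function on the compact convex set $\mathcal{D}(A)$ and then invoke the principle that such a function is maximized at an extreme point, the extreme points of $\mathcal{D}(A)$ being precisely the pure states. Since $\E$ is completely positive, it is in particular Hermiticity-preserving, so $\E(\rho)$ is Hermitian whenever $\rho$ is; this lets us write $\Tr(\E^2(\rho)) = \Tr\big(\E(\rho)^{\dag}\E(\rho)\big) = \normsub{\E(\rho)}{2}^2$, the squared Hilbert--Schmidt norm of the output.

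The first step is to establish convexity of $f(\rho) \defeq \normsub{\E(\rho)}{2}^2$. Writing $\rho = \lambda\rho_1 + (1-\lambda)\rho_2$ and setting $A \defeq \E(\rho_1)$, $B \defeq \E(\rho_2)$ (both Hermitian), linearity of $\E$ gives $\E(\rho) = \lambda A + (1-\lambda)B$, and a short expansion using cyclicity of the trace yields
\[
\lambda f(\rho_1) + (1-\lambda)f(\rho_2) - f(\rho) = \lambda(1-\lambda)\Tr\big((A-B)^2\big) = \lambda(1-\lambda)\normsub{A-B}{2}^2 \geq 0,
\]
so $f$ is convex. (Equivalently, $f$ is the composition of the linear map $\E$ with the convex squared Hilbert--Schmidt norm.)

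Next I would take the spectral decomposition $\rho = \sum_i p_i \ketbra{\psi_i}$ with $p_i \geq 0$, $\sum_i p_i = 1$, and each $\ket{\psi_i}$ a unit vector. Applying Jensen's inequality for the convex $f$,
\[
\Tr(\E^2(\rho)) = f(\rho) \leq \sum_i p_i \, f(\psi_i) \leq \max_i f(\psi_i).
\]
Choosing $\ket{\sigma} \defeq \ket{\psi_{i^*}}$ for an index $i^*$ attaining the maximum produces a pure state with $\Tr(\E^2(\rho)) \leq \Tr(\E^2(\sigma))$, which is exactly the claim.

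I do not anticipate a genuine obstacle: the whole argument reduces to the convexity of $f$, and the only point deserving care is that the identity $\Tr(X^2) = \normsub{X}{2}^2$ requires $X$ to be Hermitian, which is guaranteed because completely positive maps preserve Hermiticity. Everything else is the standard observation that a convex function on $\mathcal{D}(A)$ is maximized at a pure state.
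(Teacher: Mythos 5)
Your proof is correct and follows essentially the same route as the paper: both take the spectral decomposition of $\rho$ and reduce to the maximum of $\Tr(\E^2(\cdot))$ over the eigenvectors. The only cosmetic difference is how the cross terms are handled --- you use the exact convexity identity $\lambda(1-\lambda)\Tr\big((A-B)^2\big)\geq 0$, while the paper applies the Cauchy--Schwarz inequality $\Tr(XY)\leq\sqrt{\Tr(X^2)\Tr(Y^2)}$ followed by concavity of the square root; these amount to the same underlying fact.
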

\begin{proof}
We consider the eigen-decomposition $\rho=\sum_i p_i \ketbra{\sigma_i}$. Then 
$$\Tr(\E^2(\rho)) = \sum_{i,j}p_ip_j\Tr(\E(\sigma_i)\E(\sigma_j)) \leq \sum_{i,j}p_ip_j\sqrt{\Tr(\E^2(\sigma_i))\Tr(\E^2(\sigma_j))} = (\sum_i p_i\sqrt{\Tr(\E^2(\sigma_i))})^2,$$ where we use the Cauchy-Schwartz inequality $\Tr(XY)\leq \sqrt{\Tr(X^2)\Tr(Y^2)}$ for hermitian matrices $X,Y$. Now, using concavity of square-root, we proceed as
$$\Tr(\E^2(\rho)) \leq \sum_{i}p_i\Tr(\E^2(\sigma_i)) \leq \mathrm{max}_i \Tr(\E^2(\sigma_i)).$$ Thus proves the fact.
\end{proof}

\subsection*{Quantum channel capacities}

Given a quantum channel $\E: A\rightarrow B$ that serves as noise, we shall be interested in two kinds of capacities: the quantum capacity and the zero error classical capacity. We first describe the quantum capacity. Fix an $n>0$ and consider a $d_C$ dimensional `source' Hilbert space $\H_S$ (the $C$ in the subscript stands for `codespace', as the dimension of the system is equal to the dimension of the codespace used to encode the quantum states in the system). An encoding operation maps the register $S$ onto registers $A_1,A_2,\ldots A_n$ as follows. Alice introduces an ancillary register $T$, in the state $\ketbra{0}_T$ and applies an isometry $ST\rightarrow A_1A_2\ldots A_nT$. Under this isometry, every vector $\ket{\psi_0}\in \H_S$ gets mappes to a vector in $\ket{\psi}\in \H_{A_1A_2\ldots A_nT}$, forming a subspace $C$ of dimension $d_C$. Final implementation of encoding map involves tracing out the register $T$ (which we represent by the map $\T \defeq \ketbra{0}_T\otimes\Tr_T$) and sending the registers $A_1,A_2,\ldots A_n$ sequentially through $\E$. Let the registers output on Bob's side be $B_1,B_2,\ldots B_n$. Bob then applies a decoding (or recovery) operation $\R_C$ to obtain the registers $A_1,A_2,\ldots A_n,T$. The aim is to recover the state $\psi$ with as high fidelity as possible. Note that this is equivalent to recovering $\psi_0$, as $\psi$ and $\psi_0$ are related to each other via an isometry. We shall consider the standard definition of fidelity: $\F(\rho,\sigma) \defeq \Tr(\sqrt{\sqrt{\rho}\sigma\sqrt{\rho}})$ for quantum states $\rho,\sigma$.

We say that $\R_C$ recovers with average fidelity $\eta$ if it holds that 
$$\int_{\psi\in C}\F(\psi,\R_C(\T\otimes \E^{\otimes n}(\psi))) d\psi = \eta,$$ where $d\psi$ is the Haar measure over the codespace $C$. We define $\Q^{\eta}_n(\E)$ as the largest possible value of $\log(d_C)$ such that there exists a register $T$, an encoding subspace $C$ and a decoding operation $\R_C$ such that average fidelity is $\eta$. Quantum capacity of channel $\E$, denoted as $\Q(\E)$ is then defined as 
$$\Q(\E)\defeq \text{lim}_{\eta\rightarrow 1}\text{lim}_{n\rightarrow \infty}\frac{1}{n}Q^{\eta}_n(\E).$$

Following well known result holds for $\Q(\E)$ (see \cite{wilde2012} for a detailed discussion)
\begin{fact}[The Lloyd-Shor-Devetak theorem][\cite{lloyd97,Shor02,Devetak05}] 
\label{LSD}
For a quantum channel $\E:A\rightarrow B$ introduce a reference register $R$ with dimension of $\H_R$ same as the dimension of $\H_A$. Then 
$$\Q(\E) \geq \text{max}_{\ket{\Psi_{RA}}\in \mathcal{D}(RA)}\left(S(\E(\Psi_A))-S(I_R\otimes \E(\Psi_{RA}))\right).$$ 
\end{fact}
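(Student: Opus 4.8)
The plan is to prove achievability of the coherent-information rate by the \emph{decoupling} method, which gives the cleanest route to this lower bound. Fix a pure state $\ket{\Psi_{RA}}$ attaining the maximum on the right-hand side, and let $V:\H_A\rightarrow\H_{BE}$ be a Stinespring isometry for $\E$, so that $\E(\cdot)=\partrace{E}{V(\cdot)V^{\dagger}}$ with $E$ the environment register. Writing $\Omega_{RBE}\defeq (I_R\otimes V)\Psi_{RA}(I_R\otimes V)^{\dagger}$, which is pure, the quantity to be achieved is $S(\E(\Psi_A))-S((I_R\otimes\E)(\Psi_{RA}))=S(B)_{\Omega}-S(RB)_{\Omega}=S(B)_{\Omega}-S(E)_{\Omega}$, where the last equality uses purity of $\Omega_{RBE}$. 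The guiding principle is the quantitative no-cloning statement: quantum information is transmitted faithfully exactly when the environment learns nothing about the message. Concretely, if after encoding a message into $A^n$ and passing it through $\E^{\otimes n}$ the joint state of the reference and the environment $E^n$ is close in $\onenorm{\cdot}$ to a product state, then Uhlmann's theorem produces a decoder on $B^n$ recovering the reference--message entanglement with high fidelity. This converts the coding problem into a decoupling problem.

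First I would pass to $n$ copies and regularize the spectra. Taking $\Omega_{RBE}^{\otimes n}$ and projecting $A^n$ onto the typical subspace of $\Psi_A^{\otimes n}$ yields a subnormalized state whose relevant conditional min- and max-entropies agree with the von Neumann entropies of $\Omega$ up to $o(n)$ corrections. I would then encode by embedding a maximally entangled state between an external message reference $\bar R$ of dimension $2^{nQ}$ and a Haar-random subspace of this typical subspace of $A^n$; tracing out the complement and applying $V^{\otimes n}$ produces a state on $\bar R B^n E^n$. The heart of the argument is the \emph{decoupling theorem}: averaged over the random choice of subspace, the expected distance $\mathbb{E}\,\onenorm{\rho_{\bar R E^n}-\pi_{\bar R}\otimes\sigma_{E^n}}$ is exponentially small in $n$ provided $Q< S(B)_{\Omega}-S(E)_{\Omega}$, which is precisely the coherent information. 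The typical-subspace step is what makes the i.i.d. decoupling threshold land exactly on this entropy difference.

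Given decoupling for some fixed good subspace (which exists because the average is small), the final step converts it back into a recovery map. Since the global state on $\bar R B^n E^n$ is pure and $\rho_{\bar R E^n}\approx\pi_{\bar R}\otimes\sigma_{E^n}$ is approximately product, a product purification of the right-hand side is $\ket{\Phi}_{\bar R\hat R}\otimes\ket{\phi}_{E^n\hat E}$ with $\ket{\Phi}$ maximally entangled; Uhlmann's theorem then guarantees an isometry $B^n\rightarrow\hat R\hat E$ mapping the true state close to this purification. Composing with the partial trace over $\hat E$ defines the decoder $\R_C$, which recovers the entanglement between $\bar R$ and Bob's copy of $\hat R$ at entanglement fidelity $\to 1$. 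Entanglement fidelity lower-bounds the Haar-averaged input-state fidelity of the paper's definition via the standard relation between the two, so $\frac1n Q^{\eta}_n(\E)\ge Q$ for every $Q< S(B)_\Omega-S(E)_\Omega$ along a sequence $\eta\to 1$; taking $\eta\to 1$ and $n\to\infty$ gives $\Q(\E)\ge S(\E(\Psi_A))-S((I_R\otimes\E)(\Psi_{RA}))$, and maximizing over $\Psi_{RA}$ completes the proof.

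The main obstacle is the decoupling theorem itself, in particular the second-moment estimate behind it. Setting $\Delta\defeq\rho_{\bar R E^n}-\pi_{\bar R}\otimes\sigma_{E^n}$, one bounds $\mathbb{E}\,\onenorm{\Delta}$ by the $2$-norm after Cauchy--Schwarz, and the key quantity $\mathbb{E}\,\Tr[\Delta^2]$ is evaluated using the Haar average of $U^{\otimes 2}(\cdot)(U^{\dagger})^{\otimes 2}$ (the swap-trick / Weingarten identities). Keeping this error exponentially small while matching the rate to the coherent information is exactly what forces the typical-projection regularization, and one must also track the subnormalization and convert between trace distance and fidelity (via Fuchs--van de Graaf and Uhlmann) with care. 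An alternative is Devetak's original argument, which derives the bound from the private classical capacity of the induced classical--quantum wiretap channel (legitimate receiver $B$, eavesdropper $E$), whose private rate $I(X:B)-I(X:E)$ is ``coherified'' into a quantum code; there the difficulty migrates to the privacy-amplification and random-coding analysis for the wiretap channel.
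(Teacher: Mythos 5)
The paper does not prove this statement at all: it imports the Lloyd--Shor--Devetak theorem as a Fact, with citations to Lloyd, Shor and Devetak and a pointer to Wilde's book, so there is no in-paper argument to compare against. Your proposal is a structurally sound sketch of the now-standard decoupling proof of achievability, and all the reductions you describe are the right ones: coherent information rewritten as $S(B)_\Omega-S(E)_\Omega$ via the Stinespring purification, decoupling of the reference from the environment plus Uhlmann's theorem to manufacture the decoder, typical projection to make the one-shot decoupling threshold converge to the i.i.d.\ entropy difference, and the conversion from entanglement fidelity to the Haar-averaged subspace fidelity used in the paper's definition of $\Q^{\eta}_n(\E)$ (the relation $\bar F^2=(d_C F_e+1)/(d_C+1)\ge F_e$ handles this, and $\F\ge\F^2$ bridges the paper's use of root fidelity). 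The honest caveat, which you yourself flag, is that the entire technical weight sits in the decoupling theorem (the Haar second-moment/swap-trick computation) and in the smoothing/typicality bookkeeping; as written the proposal is a correct roadmap rather than a proof, which is an acceptable standard here given that the paper itself treats the result as a black box. Your alternative route via Devetak's private-capacity coherification is also a valid proof strategy and is in fact closer to the cited original references, whereas the decoupling route you chose is cleaner and is the one a modern treatment would use.
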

\bigskip

The zero error classical capacity \cite{medeiros05} is defined as follows. Given a collection of $M$ messages $\{1,2,\ldots M\}$, Alice encodes each message $m$ into a quantum state $\rho_m \in \D(A_1A_2\ldots A_n)$ and sends the registers $A_1,A_2,\ldots A_n$ sequentially through the channel $\E$.  Receiving all the registers $B_1,B_2,\ldots B_n$, Bob applies a decoding operation $\R$ that recovers the message $m$ with zero error. We define $\C_n(\E)$ as the largest possible $\log(M)$ such that there exist quantum states $\{\rho_1,\rho_2,\ldots \rho_m\}$ and a recovery operation $\R$ such that $\R(\E(\rho_m)) = \ketbra{m}$. Zero error classical capacity of $\E$ is now defined as $\C(\E) \defeq \text{lim}_{n\rightarrow \infty}\frac{1}{n}\C_n(\E)$.

\section{Upper bound on capacities using Petz recovery map}
Given a noise $\E: X \rightarrow Y$ acting on certain register $X$, and any positive semi-definite operator $\Pi$ on register $Y$, we define the following associated map $\P_{\Pi}(\rho) \defeq \E^{\dagger}(\Pi^{-1}\rho\Pi^{-1})$. Here, the map $\E^{\dagger}: Y\rightarrow X$ is defined as $\Tr(\sigma\E^{\dagger}(\rho)) = \Tr(\E(\sigma)\rho)$ for all $\rho \in \mathcal{D}(Y)$ and $\sigma\in \mathcal{D}(X)$. The Petz recovery map is a special case when $\Pi$ is chosen to ensure that $\P_{\Pi}$ is a quantum channel. The following relation was essentially proved in \cite{barnumknill} (and elaborated in \cite{huikhoon}). We reproduce its proof in Appendix \ref{proof:petzoptimal}.

\begin{lemma}[\cite{barnumknill}, \cite{huikhoon}]
\label{barnumknill} For any quantum map $\R: Y\rightarrow X$, the noise $\E:X\rightarrow Y$, a positive semi-definite operator $\Pi$ on register $Y$ fully supported in the image of $\E$ and any state $\psi\in \H_X$, it holds that
$$\F^2(\psi,\R(\E(\psi))) \leq \sqrt{\bra{\psi}\P_{\Pi}(\E(\psi))\ket{\psi}\bra{\psi}\R(\Pi^2)\ket{\psi}}.$$ 
\end{lemma}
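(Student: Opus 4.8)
The plan is to establish the inequality
$$\F^2(\psi,\R(\E(\psi))) \leq \sqrt{\bra{\psi}\P_{\Pi}(\E(\psi))\ket{\psi}\bra{\psi}\R(\Pi^2)\ket{\psi}}$$
by two clean moves: first replace the fidelity by a suitable inner product that is easier to manipulate, and then split that inner product into the two factors appearing on the right-hand side using Cauchy–Schwarz on the Hilbert–Schmidt inner product.

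Let me think about how this should go.

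Since the target state $\psi$ is pure, fidelity simplifies enormously. For a pure state $\ket{\psi}$ and any (sub-normalized) state $\sigma$ on the same register, we have $\F(\psi,\sigma)=\sqrt{\bra{\psi}\sigma\ket{\psi}}$, so
$$\F^2(\psi,\R(\E(\psi))) = \bra{\psi}\R(\E(\psi))\ket{\psi}.$$

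Now I want to understand what the factors on the right-hand side are. The second factor is $\bra{\psi}\R(\Pi^2)\ket{\psi}$. The first factor unwinds the definition: $\P_{\Pi}(\E(\psi)) = \E^{\dagger}(\Pi^{-1}\E(\psi)\Pi^{-1})$, so
$$\bra{\psi}\P_{\Pi}(\E(\psi))\ket{\psi} = \bra{\psi}\E^{\dagger}(\Pi^{-1}\E(\psi)\Pi^{-1})\ket{\psi} = \Tr\big(\E(\psi)\,\Pi^{-1}\E(\psi)\Pi^{-1}\big),$$
where I used the defining relation $\Tr(\sigma \E^{\dagger}(\rho)) = \Tr(\E(\sigma)\rho)$ with $\sigma = \psi = \ketbra{\psi}$. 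So the first factor equals $\Tr(\Pi^{-1}\E(\psi)\Pi^{-1}\E(\psi))$.

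Now the central idea. I want to write $\bra{\psi}\R(\E(\psi))\ket{\psi}$ as a Hilbert–Schmidt inner product and insert $\Pi\Pi^{-1}$ to separate the $\E$-dependence (which should feed into the $\P_\Pi$ factor) from the $\R$-dependence (which should feed into the $\R(\Pi^2)$ factor). Let me use the Kraus/operator representation. If I write $\E(\psi) = \sum_k E_k \ketbra{\psi} E_k^\dagger$ with Kraus operators $E_k$, and correspondingly $\R(\cdot) = \sum_j R_j (\cdot) R_j^\dagger$, then
$$\bra{\psi}\R(\E(\psi))\ket{\psi} = \sum_{j,k}\abs{\bra{\psi}R_j E_k\ket{\psi}}^2.$$
Into each amplitude $\bra{\psi}R_j E_k\ket{\psi}$ I insert the identity $\Pi^{-1}\Pi$ (legitimate since $\Pi$ is fully supported on the image of $\E$, so $\Pi^{-1}\Pi$ acts as the identity on the relevant subspace where $E_k\ket{\psi}$ lives):
$$\bra{\psi}R_j E_k\ket{\psi} = \bra{\psi}R_j \Pi \,\Pi^{-1} E_k\ket{\psi}.$$
Then apply Cauchy–Schwarz to the inner product $\langle \Pi R_j^\dagger\ket{\psi},\ \Pi^{-1}E_k\ket{\psi}\rangle$, obtaining
$$\abs{\bra{\psi}R_j\Pi\,\Pi^{-1}E_k\ket{\psi}}^2 \leq \bra{\psi}R_j\Pi^2 R_j^\dagger\ket{\psi}\cdot\bra{\psi}E_k^\dagger\Pi^{-2}E_k\ket{\psi}.$$
Summing over $j,k$ and applying Cauchy–Schwarz once more to the sum of products (over the joint index), the two sums factor as $\sum_j \bra{\psi}R_j\Pi^2 R_j^\dagger\ket{\psi} = \bra{\psi}\R(\Pi^2)\ket{\psi}$ and $\sum_k \bra{\psi}E_k^\dagger\Pi^{-2}E_k\ket{\psi}$, which should recombine into the $\P_\Pi$ factor after accounting for the square root.

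The main obstacle I anticipate is getting the square root and the double Cauchy–Schwarz to produce exactly the claimed bound rather than a weaker product of the two factors: a naive termwise Cauchy–Schwarz gives $\bra{\psi}\R(\E(\psi))\ket{\psi} \leq \bra{\psi}\R(\Pi^2)\ket{\psi}\cdot \bra{\psi}\P_\Pi(\E(\psi))\ket{\psi}$ without the square root, so I will need to be more careful — most likely by treating $\bra{\psi}\R(\E(\psi))\ket{\psi}$ as a single bilinear expression and applying Cauchy–Schwarz only once at the outer level, so that each factor carries a single power of $\Pi^{\pm 1}$ and the geometric mean (hence the square root) emerges naturally. A secondary point requiring care is justifying the insertion $\Pi^{-1}\Pi = $ identity: this is exactly where the hypothesis that $\Pi$ is fully supported in the image of $\E$ is used, since then $E_k\ket{\psi}$ lies in the support of $\Pi$ and $\Pi^{-1}$ (the Moore–Penrose pseudoinverse on that support) satisfies $\Pi^{-1}\Pi E_k\ket{\psi} = E_k\ket{\psi}$.
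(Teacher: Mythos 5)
There is a genuine gap here, and you have partly flagged it yourself: the proposal never actually produces the geometric mean on the right-hand side. Your termwise Cauchy--Schwarz over the double Kraus index $(j,k)$ yields $\bra{\psi}\R(\Pi^2)\ket{\psi}\cdot\sum_k\bra{\psi}E_k^{\dagger}\Pi^{-2}E_k\ket{\psi}=\bra{\psi}\R(\Pi^2)\ket{\psi}\cdot\bra{\psi}\E^{\dagger}(\Pi^{-2})\ket{\psi}$; the second factor is \emph{not} $\bra{\psi}\P_{\Pi}(\E(\psi))\ket{\psi}=\Tr(\Pi^{-1}\E(\psi)\Pi^{-1}\E(\psi))$ and does not ``recombine'' into it. Moreover, the insertion of full powers $\Pi\,\Pi^{-1}$ is already the wrong move: even under your fallback plan of a single outer Cauchy--Schwarz on the Hilbert--Schmidt pairing $\Tr\bigl(\R^{\dagger}(\psi)\Pi\cdot\Pi^{-1}\E(\psi)\bigr)$, the $\E$-side factor comes out as $\Tr(\E(\psi)\Pi^{-2}\E(\psi))$, which is in general strictly larger than $\Tr(\Pi^{-1}\E(\psi)\Pi^{-1}\E(\psi))$ (itself a Cauchy--Schwarz consequence), so you would only prove a weaker inequality than the lemma. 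The correct insertion is of half powers $\Pi^{1/2}\Pi^{-1/2}$, so that each side of the Cauchy--Schwarz carries a symmetric sandwich by a single power of $\Pi^{\pm 1}$.

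The idea you are missing --- the one the paper uses to extract the square root --- is a preliminary diagonalization: form the amplitude matrix $X_{ij}=\bra{\psi}R_jE_i\ket{\psi}$ and use its singular value decomposition to pass to new Kraus operators $E'_k,R'_k$ for which $\F^2(\psi,\R(\E(\psi)))=\sum_k|\bra{\psi}R'_kE'_k\ket{\psi}|^2$ is a \emph{single} sum over a common index. One then inserts $\Pi^{1/2}\Pi^{-1/2}$, applies Cauchy--Schwarz termwise to get $\sum_k\Tr(\psi R'_k\Pi R_k'^{\dagger})\Tr(\Pi^{-1}E'_k\psi E_k'^{\dagger})$, and applies Cauchy--Schwarz a second time over the index $k$ to split this into $\sqrt{\sum_k|\bra{\psi}R'_k\Pi R_k'^{\dagger}\ket{\psi}|^2}\cdot\sqrt{\sum_k|\bra{\psi}E_k'^{\dagger}\Pi^{-1}E'_k\ket{\psi}|^2}$ --- this second application is where the square root comes from. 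Finally, the first sum is bounded by $\bra{\psi}\R(\Pi^2)\ket{\psi}$ using $\sum_lR_l'^{\dagger}\ketbra{\psi}R'_l\leq I$, and the second is completed to the double sum $\sum_{k,l}|\bra{\psi}E_k'^{\dagger}\Pi^{-1}E'_l\ket{\psi}|^2=\bra{\psi}\P_{\Pi}(\E(\psi))\ket{\psi}$. Without the SVD step the double-index sum does not split into a product of the two desired factors, and without the half-power insertion you do not land on the $\Pi^{-1}(\cdot)\Pi^{-1}$ sandwich that defines $\P_{\Pi}$. Your observation that full support of $\Pi$ on the image of $\E$ justifies the insertion is correct and is used in the same way in the paper.
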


Now, as discussed in Section \ref{prelims}, consider the setting of $n$ registers $A_1,A_2\ldots A_n$, such that all $A_i\equiv A$. Let $\E:A\rightarrow B$ be a noise, which acts independently on above registers as $\E^{\otimes n}: A_1\otimes A_2\otimes \ldots A_n \rightarrow B_1\otimes B_2\otimes \ldots B_n$. For the operator $\T$, we consider the associated map $\P_{\sqrt{d_T}\ketbra{0}}$, which we simply abbreviate as $\P_T$.  Here $d_T$ is the dimension of $\H_T$. From the Kraus representation of $\T$ (that is, $\T(\rho) = \sum_i \ket{0}\bra{i}\rho\ket{i}\bra{0}$), it is easy to observe that $\P_T(\ketbra{0}) = \T^{\dagger}(\frac{\ketbra{0}}{d_T}) = \frac{I_T}{d_T}$.

For the channel $\E$ and operator $\Pi$ supported on the image of $\E^{\otimes n}$, define the following map: $$G_{\E,\Pi}(.)\defeq \Pi^{-1/2}\E(.)\Pi^{-1/2}.$$

 Then we have the following lemma.
\begin{lemma}
\label{dimofcodespace}
Given a noise $\E:A\rightarrow B$ such that dimension of $\H_A$ is $d$ and a codespace $C$ (along with register $T$ and recovery map $\R_C$) with average fidelity $\eta$, we have $$d_C\leq \frac{1}{\eta^4}\mathrm{min}_{\Pi} \|G_{\E^{\otimes n},\Pi}\|_2\cdot\Tr(\Pi^2),$$
where minimization is over all positive semi-definite operators $\Pi$ that are in the support of image of $\E^{\otimes n}$.  
\end{lemma}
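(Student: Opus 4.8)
The plan is to apply the Barnum--Knill bound (\cref{barnumknill}) to the full noise $\mathcal{N}\defeq\T\otimes\E^{\otimes n}:TA_1\cdots A_n\rightarrow TB_1\cdots B_n$ together with the operator $\widetilde{\Pi}\defeq\sqrt{d_T}\ketbra{0}_T\otimes\Pi$, where $\Pi$ is supported in the image of $\E^{\otimes n}$. Since $\T$ always outputs $\ketbra{0}_T$, the operator $\widetilde{\Pi}$ is supported in the image of $\mathcal{N}$, so the hypotheses of \cref{barnumknill} are met. Because the codespace consists of pure states, for every $\ket{\psi}$ in $C$ this gives
\begin{equation*}
\F^2(\psi,\R_C(\mathcal{N}(\psi)))\leq\sqrt{\bra{\psi}\P_{\widetilde{\Pi}}(\mathcal{N}(\psi))\ket{\psi}\cdot\bra{\psi}\R_C(\widetilde{\Pi}^2)\ket{\psi}}.
\end{equation*}
I would then integrate this over the Haar measure on $C$. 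By convexity of $x\mapsto x^2$ (Jensen) the left side averages to at least $\eta^2$, and the Cauchy--Schwarz inequality for integrals separates the right side, yielding
\begin{equation*}
\eta^4\leq\left(\int_{\psi\in C}\bra{\psi}\P_{\widetilde{\Pi}}(\mathcal{N}(\psi))\ket{\psi}\,d\psi\right)\left(\int_{\psi\in C}\bra{\psi}\R_C(\widetilde{\Pi}^2)\ket{\psi}\,d\psi\right).
\end{equation*}

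For the second factor I would use the standard identity $\int_{\psi\in C}\ketbra{\psi}\,d\psi=\Pi_C/d_C$ (with $\Pi_C$ the projector onto $C$) to rewrite it as $\tfrac{1}{d_C}\Tr(\R_C(\widetilde{\Pi}^2)\Pi_C)$; bounding $\Pi_C\leq I$ and using that the decoder $\R_C$ is trace preserving collapses this to $\tfrac{1}{d_C}\Tr(\widetilde{\Pi}^2)=\tfrac{d_T}{d_C}\Tr(\Pi^2)$, since $\widetilde{\Pi}^2=d_T\ketbra{0}_T\otimes\Pi^2$.

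The first factor is where the key computation sits. Because $\mathcal{N}(\psi)=\ketbra{0}_T\otimes\E^{\otimes n}(\tau)$ with $\tau\defeq\partrace{T}{\psi}$, and because $\P_T(\ketbra{0}_T)=I_T/d_T$ (equivalently $\T^{\dagger}(\ketbra{0}_T)=I_T$), the map $\P_{\widetilde{\Pi}}$ factorizes so that the $T$ register contributes exactly the scalar $d_T^{-1}$. Tracking this through, together with the adjoint relation and cyclicity of the trace, I expect to obtain $\bra{\psi}\P_{\widetilde{\Pi}}(\mathcal{N}(\psi))\ket{\psi}=d_T^{-1}\Tr(G_{\E^{\otimes n},\Pi}^2(\tau))$. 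Since $\tau$ is a genuine state on $A_1\cdots A_n$, the definition of the maximum output $2$-norm bounds this by $d_T^{-1}\|G_{\E^{\otimes n},\Pi}\|_2$, so the first integral is at most $d_T^{-1}\|G_{\E^{\otimes n},\Pi}\|_2$.

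Substituting both bounds, the two factors of $d_T$ cancel and I arrive at $\eta^4\leq\tfrac{1}{d_C}\|G_{\E^{\otimes n},\Pi}\|_2\Tr(\Pi^2)$; rearranging and minimizing over admissible $\Pi$ gives the claim. The main obstacle is precisely the first factor: correctly identifying $\bra{\psi}\P_{\widetilde{\Pi}}(\mathcal{N}(\psi))\ket{\psi}$ with $d_T^{-1}\Tr(G_{\E^{\otimes n},\Pi}^2(\tau))$, which requires carefully tracing the powers of $d_T$ through the $T$ register and checking that it is the reduced state $\tau$, rather than the pure $\psi$, that feeds into the $2$-norm, so that the $d_T$ factors from the two integrals indeed cancel.
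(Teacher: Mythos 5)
Your proposal is correct and follows essentially the same route as the paper: both apply \cref{barnumknill} to the noise $\T\otimes\E^{\otimes n}$ with the operator $\sqrt{d_T}\ketbra{0}_T\otimes\Pi$, bound the Petz factor pointwise by $d_T^{-1}\|G_{\E^{\otimes n},\Pi}\|_2$ via the reduced state $\Tr_T\psi$, and extract $\Tr(\Pi^2)$ from the recovery factor. The only (immaterial) differences are organizational: the paper sums over a full orthonormal basis so that $\sum_i \phi_i = I$ yields $\Tr(\Pi^2)$ exactly and then invokes convexity of $x\mapsto x^4$, whereas you integrate directly over the codespace using $\int\ketbra{\psi}\,d\psi=\Pi_C/d_C\leq I/d_C$ and combine Jensen with Cauchy--Schwarz.
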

\begin{proof}

Fix an orthonormal basis in $\H_{A_1A_2,\ldots A_nT}$ : $\{\ket{\phi_1},\ket{\phi_2}\ldots \ket{\phi_{d^n\cdot d_T}}\}$ such that for all $i\leq d_C$, $\phi_i\subset C$. Let $\Pi$ be any operator fully supported in the image of $\E^{\otimes n}$. Consider the following map associated to $\E^{\otimes n}$: $$\P_{\Pi}(.) \defeq \E^{\dagger\otimes n}(\Pi^{-1}(.)\Pi^{-1}).$$ 

We apply  Lemma \ref{barnumknill} to the `noise' $\T\otimes \E^{\otimes n}$ and the map $\P_T\otimes \P_{\Pi}$:
\begin{equation}
\label{petzonnoise}
\sum_i \F^4(\phi_i,\R_C(\T\otimes\E^{\otimes n}(\phi_i))) \leq \sum_i \bra{\phi_i}\P_T\otimes\P_{\Pi}(\T\otimes\E^{\otimes n}(\phi_i))\ket{\phi_i}\bra{\phi_i}\R_C(d_T\ketbra{0}\otimes \Pi^2))\ket{\phi_i}
\end{equation}

We shall upper bound each term $\bra{\phi_i}\P_T\otimes\P_{\Pi}(\T\otimes\E^{\otimes n}(\phi_i))\ket{\phi_i}$ as follows.
\begin{align*}
&\bra{\phi_i}\P_T\otimes\P_{\Pi}(\T\otimes\E^{\otimes n}(\phi_i))\ket{\phi_i}\\
&= \bra{\phi_i}\P_T(\ketbra{0})\otimes\P_{\Pi}(\E^{\otimes n}(\Tr_T\phi_i))\ket{\phi_i}\\ 
&(\text{as } \T \mbox{ traces out register } T \mbox{ and replaces it with the state } \ketbra{0})\\ 
&= \bra{\phi_i}\frac{I_T}{d_T}\otimes\P_{\Pi}(\E^{\otimes n}(\Tr_T\phi_i))\ket{\phi_i} \\ 
&(\text{as } \P_T \mbox{ replaces the state }\ketbra{0} \mbox{ with the maximally mixed state on register } T)\\
&\leq \frac{1}{d_T}\text{max}_i\Tr((\Tr_T\phi_i)\P_{\Pi}(\E^{\otimes n}(\Tr_T\phi_i)))\\ 
&= \frac{1}{d_T}\text{max}_i\Tr(\E^{\otimes n}(\Tr_T\phi_i)\Pi^{-1}\E^{\otimes n}(\Tr_T\phi_i)\Pi^{-1})\\ 
& (\mbox{follows by incorporating the definition of the map } \P_{\Pi}) \\ 
&= \frac{1}{d_T}\text{max}_i\Tr(\Pi^{-1/2}\E^{\otimes n}(\Tr_T(\phi_i))\Pi^{-1/2}\Pi^{-1/2}\E^{\otimes n}(\Tr_T(\phi_i))\Pi^{-1/2}) \\ 
& (\mbox{writing } \Pi^{-1} =\Pi^{-1/2}\Pi^{-1/2}\mbox{ and then using cyclicity of trace})\\
&= \frac{1}{d_T}\text{max}_i\Tr((G_{\E^{\otimes n},\Pi}(\Tr_T(\phi_i)))^2) \leq \frac{1}{d_T}\|G_{\E^{\otimes n},\Pi}\|_2.
\end{align*}

Applying it in Equation \ref{petzonnoise}, we obtain

\begin{align*}
&\sum_i \F^4(\phi_i,\R_C(\T\otimes\E^{\otimes n}(\phi_i)))\\ 
& \leq \frac{\|G_{\E^{\otimes n},\Pi}\|_2}{d_T}\sum_i \bra{\phi_i}\R_C(d_T\ketbra{0}\otimes \Pi^2))\ket{\phi_i}\\
& (\mbox{as each term } \bra{\phi_i}\R_C(d_T\ketbra{0}\otimes \Pi^2))\ket{\phi_i} \mbox{ is positive}) \\
 &= \|G_{\E^{\otimes n},\Pi}\|_2\sum_i \Tr(\R_C(\ketbra{0}\otimes \Pi^2)\phi_i) \\ 
&= \|G_{\E^{\otimes n},\Pi}\|_2\cdot \Tr(\R_C(\ketbra{0}\otimes \Pi^2)\sum_i \phi_i) \\ 
&= \|G_{\E^{\otimes n},\Pi}\|_2\cdot \Tr(\R_C(\ketbra{0}\otimes \Pi^2)) \\ 
& (\sum_i \phi_i = I^{\otimes n}\otimes I_T, \mbox{ since } \phi_i \mbox{ form an orthonormal basis})\\ 
&=   \|G_{\E^{\otimes n},\Pi}\|_2\cdot \Tr(\Pi^2)\quad  (\mbox{as the map } \R_C \mbox{ is a trace preserving quantum map})
\end{align*}

On the other hand, $$\sum_i \F^4(\phi_i,R_C(\T\otimes \E^{\otimes n}(\phi_i)))\geq d_C\cdot \frac{\sum_{i\leq d_C} \F^4(\phi_i,\R_C(\T\otimes \E^{\otimes n}(\phi_i)))}{d_C}.$$
Thus, we obtain
$$ \|G_{\E^{\otimes n},\Pi}\|_2\cdot \Tr(\Pi^2) \geq d_C\cdot \frac{\sum_{i\leq d_C} \F^4(\phi_i,\R_C(\T\otimes \E^{\otimes n}(\phi_i)))}{d_C}.$$
Now, averaging over all possible basis in codespace $C$, we find that 
$$ \|G_{\E^{\otimes n},\Pi}\|_2\cdot \Tr(\Pi^2) \geq d_C\cdot \int_{\phi\in C} \F^4(\phi_i,\R_C(\T\otimes \E^{\otimes n}(\phi)))d\phi \geq d_C\cdot (\int_{\phi\in C} \F(\phi_i,\R_C(\T\otimes \E^{\otimes n}(\phi)))d\phi)^4,$$ where last inequality follows by convexity of the function $x\rightarrow x^4$. This proves the lemma, by incorporating the definition of average fidelity $\eta$ and optimizing over all possible positive semi-definite operators $\Pi$ supported in the image of $\E^{\otimes n}$.
\end{proof} 

We have the following corollary of above lemma, which gives an upper bound on the quantum capacity and also says that exceeding this upper bound leads to decrease in average fidelity exponentially in $n$. Since we shall use this corollary in later sections for unital channels, we have restricted its statement to such channels. 

\begin{cor}
\label{cor:unital}
Suppose the channel $\E$ is unital. Then we have that 
$$\Q(\E)\leq \log(d\cdot \mathrm{lim}_{n\rightarrow \infty}\|\E^{\otimes n}\|^{1/n}_2).$$
Furthermore, let $C$ be any codespace of dimension $d_C = d^n\|\E^{\otimes n}\|_2(1+\beta)^n$, for some $\beta>0$. Then the average fidelity $\eta$ satisfies the following relation, irrespective of the recovery map: $$\eta^4 \leq \frac{1}{(1+\beta)^n}.$$
\end{cor}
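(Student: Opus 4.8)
The plan is to specialize the free parameter $\Pi$ in Lemma~\ref{dimofcodespace} to the identity operator, and this is exactly the step where unitality enters. Since $\E$ is unital, so is $\E^{\otimes n}$ (a tensor product of unital channels is unital), whence $\E^{\otimes n}(I_{B_1\cdots B_n}) = I_{B_1\cdots B_n}$. In particular $\E^{\otimes n}(I/d^n) = I/d^n$ is full rank, so the support of the image of $\E^{\otimes n}$ is the whole of $\H_{B_1\cdots B_n}$, a space of dimension $d^n$, and therefore $\Pi = I$ is an admissible choice in the minimization. This admissibility of $\Pi = I$ is precisely what fails for a general non-unital channel, and it is the conceptual crux of the argument.

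With this choice I read off the two factors appearing in Lemma~\ref{dimofcodespace}. First, $G_{\E^{\otimes n},I}(\cdot) = \E^{\otimes n}(\cdot)$, so $\|G_{\E^{\otimes n},I}\|_2 = \|\E^{\otimes n}\|_2$ directly from the definition of the output $2$-norm; second, $\Tr(I^2) = \Tr(I) = d^n$. Since the minimum over $\Pi$ is at most the value at $\Pi = I$, substituting into Lemma~\ref{dimofcodespace} gives $d_C \le \eta^{-4}\,\|\E^{\otimes n}\|_2\, d^n$, which rearranges to $\eta^4 \le \|\E^{\otimes n}\|_2\, d^n / d_C$. The second assertion of the corollary is then immediate: setting $d_C = d^n\,\|\E^{\otimes n}\|_2\,(1+\beta)^n$ cancels the numerator and leaves $\eta^4 \le (1+\beta)^{-n}$, with no reference to the recovery map.

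For the capacity bound I would take logarithms of $d_C \le \eta^{-4}\|\E^{\otimes n}\|_2 d^n$, so that every code of average fidelity $\eta$ obeys $\log d_C \le -4\log\eta + \log\|\E^{\otimes n}\|_2 + n\log d$, and hence $\tfrac{1}{n}\Q^{\eta}_n(\E) \le \tfrac{-4\log\eta}{n} + \tfrac{1}{n}\log\|\E^{\otimes n}\|_2 + \log d$. Letting $n\to\infty$ the term $\tfrac{-4\log\eta}{n}$ (which is nonnegative since $\eta \le 1$) vanishes, and as the resulting right-hand side is independent of $\eta$ the outer limit $\eta\to 1$ in the definition of $\Q(\E)$ changes nothing, leaving $\Q(\E) \le \log d + \lim_{n\to\infty}\tfrac{1}{n}\log\|\E^{\otimes n}\|_2 = \log\br{d\cdot \lim_{n\to\infty}\|\E^{\otimes n}\|_2^{1/n}}$.

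The one technical point requiring care is the existence of the limit $\lim_{n\to\infty}\|\E^{\otimes n}\|_2^{1/n}$, and this is where I expect the minor fuss to lie. I would argue that $\Tr(\sigma^2)\le 1$ for every state forces $\|\E^{\otimes n}\|_2 \le 1$, while evaluating $\|\E^{\otimes(n+m)}\|_2$ on a product of the optimal (pure, by Fact~\ref{pureopt}) inputs for $\E^{\otimes n}$ and $\E^{\otimes m}$ shows that $\log\|\E^{\otimes n}\|_2$ is superadditive. Being superadditive and bounded above by $0$, Fekete's lemma guarantees that $\tfrac{1}{n}\log\|\E^{\otimes n}\|_2$ converges (to its supremum), so the regularized output $2$-norm is well defined and the displayed bound is meaningful.
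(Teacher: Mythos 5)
Your proposal is correct and follows essentially the same route as the paper: choose $\Pi = I^{\otimes n}$ in Lemma \ref{dimofcodespace}, observe $G_{\E^{\otimes n},I}=\E^{\otimes n}$ and $\Tr(I)=d^n$, and then take the limits $n\to\infty$, $\eta\to 1$ for the capacity bound and substitute directly for the fidelity decay. Your additional remarks --- that unitality is what makes $\Pi=I$ admissible, and that $\log\|\E^{\otimes n}\|_2$ is superadditive and bounded so the regularized limit exists by Fekete's lemma --- are correct and fill in details the paper leaves implicit, but do not change the argument.
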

\begin{proof}
In Lemma \ref{dimofcodespace}, we choose $\Pi = I^{\otimes n}$. This gives $\G_{\E^{\otimes n},\Pi} = \E^{\otimes n}$ and we find that 
$$\frac{1}{n}\log(d_C)\leq \log(d\cdot \mathrm{lim}_{n\rightarrow \infty}\|\E^{\otimes n}\|^{1/n}_2/\eta^4).$$ Now we take the limit $n\rightarrow \infty$ and then take $\eta\rightarrow 1$. Second part of the corollary proceeds by direct substitution in Lemma \ref{dimofcodespace}, with the choice of $\Pi = I^{\otimes n}$.
\end{proof}

For the zero error classical capacity of $\E$, similar result is shown to hold. 
\begin{lemma}
\label{zeroerror}
It holds that $\C(\E) \leq \mathrm{lim}_{n\rightarrow \infty}\frac{1}{n}\log(\mathrm{min}_{\Pi}\|G_{\E^{\otimes n},\Pi}\|_2\cdot \Tr(\Pi^2))$. 
\end{lemma}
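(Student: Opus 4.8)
The plan is to mirror the argument of \cref{dimofcodespace}, but with the quantum codespace replaced by a set of classical codewords and with a recovery map tailored to classical decoding. First I would reduce to a clean form of the code. Given any zero-error code achieving $\log M = \C_n(\E)$ with states $\rho_1,\dots,\rho_M$ on $A_1\cdots A_n$ and decoder $\R$ satisfying $\R(\E^{\otimes n}(\rho_m)) = \ketbra{m}$, I would first pass to pure codewords: picking any pure state $\psi_m$ appearing in a spectral decomposition of $\rho_m$, linearity of $\R\circ\E^{\otimes n}$ together with the fact that $\ketbra{m}$ is an extreme point of the set of density matrices forces $\R(\E^{\otimes n}(\psi_m)) = \ketbra{m}$ as well, so we may assume each $\rho_m = \psi_m$ is pure. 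Next, since the outputs $\ketbra{m}$ are mutually orthogonal and fidelity is non-decreasing under the quantum channels $\E^{\otimes n}$ and $\R$, we get $\F(\psi_m,\psi_{m'}) \le \F(\ketbra{m},\ketbra{m'}) = 0$ for $m \neq m'$, so the codewords are orthonormal, $\braket{\psi_m}{\psi_{m'}} = \delta_{m,m'}$. This orthogonality is what makes the final bound telescope, and establishing it cleanly is the one genuinely code-specific step.

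With this in hand, I would introduce the modified recovery map $\R'(\cdot) \defeq \sum_m \bra{m}\R(\cdot)\ket{m}\,\psi_m$, which first decodes the classical message via $\R$ and then re-prepares the corresponding codeword; this is a legitimate quantum map. By construction $\R'(\E^{\otimes n}(\psi_m)) = \psi_m$, so $\F(\psi_m,\R'(\E^{\otimes n}(\psi_m))) = 1$. Now I apply \cref{barnumknill} with noise $\E^{\otimes n}$, recovery $\R'$, state $\psi_m$, and any positive semi-definite $\Pi$ supported in the image of $\E^{\otimes n}$, which yields $1 \le \sqrt{\bra{\psi_m}\P_{\Pi}(\E^{\otimes n}(\psi_m))\ket{\psi_m}\cdot\bra{\psi_m}\R'(\Pi^2)\ket{\psi_m}}$ for every $m$. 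Note that keeping the noise equal to $\E^{\otimes n}$ (rather than absorbing the encoding into it) is what makes the bound come out in terms of $G_{\E^{\otimes n},\Pi}$ and $\Tr(\Pi^2)$, matching the statement; no ancillary register $T$ is needed here since the encoding maps directly into $A_1\cdots A_n$.

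It remains to sum over $m$ and bound the two factors, exactly as in \cref{dimofcodespace}. The first factor is controlled by the same manipulation used there: $\bra{\psi_m}\P_{\Pi}(\E^{\otimes n}(\psi_m))\ket{\psi_m} = \Tr((G_{\E^{\otimes n},\Pi}(\psi_m))^2) \le \|G_{\E^{\otimes n},\Pi}\|_2$. For the second factor, orthonormality gives $\bra{\psi_m}\R'(\Pi^2)\ket{\psi_m} = \bra{m}\R(\Pi^2)\ket{m}$, whence $\sum_m \bra{\psi_m}\R'(\Pi^2)\ket{\psi_m} \le \Tr(\R(\Pi^2)) = \Tr(\Pi^2)$ using that $\R$ is trace-preserving. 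Summing the per-$m$ inequality and applying Cauchy-Schwarz, $M \le \sqrt{\|G_{\E^{\otimes n},\Pi}\|_2}\sum_m\sqrt{\bra{\psi_m}\R'(\Pi^2)\ket{\psi_m}} \le \sqrt{\|G_{\E^{\otimes n},\Pi}\|_2}\,\sqrt{M}\,\sqrt{\Tr(\Pi^2)}$, hence $M \le \|G_{\E^{\otimes n},\Pi}\|_2\,\Tr(\Pi^2)$. Taking logarithms, minimizing over $\Pi$, and using $\C(\E) = \lim_{n\to\infty}\frac1n \C_n(\E)$ then gives the claim. I expect the main subtlety to be not any single estimate but the correct choice of $\R'$: it must simultaneously achieve fidelity $1$ and make $\sum_m \bra{\psi_m}\R'(\Pi^2)\ket{\psi_m}$ collapse to $\Tr(\Pi^2)$, and both features hinge on the orthonormality of the codewords derived in the first step.
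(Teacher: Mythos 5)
Your proof is correct and follows essentially the same route as the paper: reduce to orthonormal pure codewords, build a recovery map that reproduces each codeword with fidelity $1$, and rerun the Lemma~\ref{dimofcodespace} argument with $\eta=1$ to get $M\le\|G_{\E^{\otimes n},\Pi}\|_2\cdot\Tr(\Pi^2)$. The only cosmetic difference is that the paper obtains orthonormal pure states by purifying the $\rho_m$ into an ancilla register $T$ (so the noise becomes $\T\otimes\E^{\otimes n}$), whereas you extract pure codewords via extremality of $\ketbra{m}$ and monotonicity of fidelity and work directly on $A_1\cdots A_n$; both yield the same bound.
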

\begin{proof}
Given the constraint $\R_C(\E(\rho_m))=\ketbra{m}$, we find that $\rho_m\rho_{m'}=0$ if $m\neq m'$. Now, for the mapping $m\rightarrow \rho_m$, we consider a purifying register $T$ such that $\psi_m \in \D(A_1A_2\ldots A_nT)$ is a purification of $\rho_m$. Clearly, $\psi_1,\psi_2,\ldots\psi_M$ form a basis in a $M$-dimensional subspace of $\H_{A_1A_2\ldots A_nT}$. Thus, we can repeat the analysis in Lemma \ref{dimofcodespace} with $\eta=1$, from which this lemma follows.  
\end{proof}

\section{Regularized $2$-norm for unital channels and capacity of expanders}

In this section, we shall restrict ourselves to unital channels acting on a $d$ dimensional Hilbert space. Let the Kraus decomposition of $\E: A\rightarrow A$ be $\E(.) = \sum_i E_i (.) E_i^{\dagger}$. Since $\E$ is unital, $I$ is a fixed point of $\E$ with eigenvalue $1$. Second largest singular value of $\E$ is defined as $\lambda_2(\E) \defeq \mathrm{max}_{\rho: \Tr(\rho) = 0, \Tr(\rho^{\dagger}\rho)=1} \sqrt{\Tr(\E(\rho)^{\dagger}\E(\rho))}$. Then we have the following lemma, proved in Appendix \ref{2-normmultiplicativity}.
\begin{lemma}
\label{lem:multiplicativebound}
Let $\E$ be a unital channel with second largest singular value $\lambda_2 \defeq \lambda_2(\E)<1$. For all $n\geq 1$, it holds that $$\|\E^{\otimes n}\|_2 \leq (\frac{1}{d}+\lambda_2^2)^n.$$ In particular, 
$\mathrm{lim}_{n\rightarrow \infty}\|\E^{\otimes n}\|^{1/n}_2 \leq (\frac{1}{d}+\lambda_2^2)$.
\end{lemma}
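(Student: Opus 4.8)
The plan is to pass to the Hilbert--Schmidt (operator) picture and exploit the product structure of $\E^{\otimes n}$ sector by sector. By \cref{pureopt} it suffices to bound $\Tr\big((\E^{\otimes n}(\psi))^2\big)=\|\E^{\otimes n}(\psi)\|_2^2$ for an arbitrary pure state $\psi$ on $A_1\cdots A_n$, where $\|\cdot\|_2$ now denotes the Hilbert--Schmidt norm on operators with inner product $\langle X,Y\rangle=\Tr(X^{\dagger}Y)$. First I would split the single-system operator space as $\mathcal{M}_d=\mathrm{span}(I/\sqrt d)\oplus\mathcal{N}$, where $\mathcal{N}$ is the subspace of traceless operators; this is an orthogonal decomposition. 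Taking the $n$-fold tensor product decomposes the operator space on $A_1\cdots A_n$ into $2^n$ mutually orthogonal sectors indexed by subsets $S\subseteq[n]$: sector $S$ equals $\bigotimes_{i\in S}\mathcal{N}\otimes\bigotimes_{i\notin S}\mathrm{span}(I/\sqrt d)$. Writing $\psi=\sum_{S}\psi_S$ for the induced orthogonal decomposition, I set $w_S:=\|\psi_S\|_2^2$, so that $\sum_S w_S=\|\psi\|_2^2=1$.

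The next step uses two elementary properties of $\E$. Since $\E$ is unital it fixes $\mathrm{span}(I)$, acting isometrically in $\|\cdot\|_2$ there, and since $\E$ is trace preserving it maps $\mathcal{N}$ into $\mathcal{N}$ with $\|\E(\tau)\|_2\le\lambda_2\|\tau\|_2$ by the very definition of $\lambda_2$. Consequently $\E^{\otimes n}$ leaves each sector invariant, acting on sector $S$ as $\E|_{\mathcal{N}}$ on the factors indexed by $S$ and as the identity on the remaining factors; its operator norm there, with respect to $\|\cdot\|_2$, is therefore the product $\lambda_2^{|S|}$. Because the sectors are orthogonal and invariant, the images $\E^{\otimes n}(\psi_S)$ are orthogonal, and hence
\[
\|\E^{\otimes n}(\psi)\|_2^2=\sum_{S}\|\E^{\otimes n}(\psi_S)\|_2^2\le\sum_{S}\lambda_2^{2|S|}\,w_S .
\]

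The crux is a sharp bound on the sector weights $w_S$, and this is the step I expect to be the main obstacle. The naive constraints ($\sum_S w_S=1$ together with $w_\emptyset=d^{-n}$) are far too weak: they would permit placing almost all the weight on a single-traceless sector and only yield $\|\E^{\otimes n}(\psi)\|_2^2\lesssim\lambda_2^2$, which for large $d$ and small $\lambda_2$ exceeds $(\tfrac1d+\lambda_2^2)^n$. The property I would instead exploit is the positivity and normalization of the reduced states. Expanding $\psi_S$ in the orthonormal basis $\{\bigotimes_{i\in S}P_{a_i}\otimes\bigotimes_{i\notin S}I/\sqrt d\}$ of sector $S$ (with $\{P_a\}$ a Hermitian, traceless Hilbert--Schmidt basis of $\mathcal{N}$) and tracing out the systems outside $S$, each coefficient equals $d^{-(n-|S|)/2}\,\Tr\!\big((\bigotimes_{i\in S}P_{a_i})\,\rho_S\big)$, where $\rho_S=\Tr_{\overline S}\psi$ is the marginal on the systems in $S$. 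Summing their squares gives
\[
w_S=\frac{1}{d^{\,n-|S|}}\,\Big\|\Pi_{\mathcal{N}^{\otimes S}}\rho_S\Big\|_2^2\le\frac{1}{d^{\,n-|S|}}\,\Tr(\rho_S^2)\le\frac{1}{d^{\,n-|S|}},
\]
using that a projection is a contraction and $\Tr(\rho_S^2)\le 1$. Substituting $w_S\le d^{-(n-|S|)}$ into the displayed estimate and summing by the binomial theorem yields
\[
\|\E^{\otimes n}(\psi)\|_2^2\le\sum_{S\subseteq[n]}\lambda_2^{2|S|}\,d^{-(n-|S|)}=\sum_{k=0}^{n}\binom{n}{k}\lambda_2^{2k}\Big(\tfrac1d\Big)^{n-k}=\Big(\tfrac1d+\lambda_2^2\Big)^{n},
\]
which is the claimed bound; the limit statement follows by taking $n$-th roots.
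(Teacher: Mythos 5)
Your proof is correct and follows essentially the same route as the paper's: the sector decomposition indexed by $S\subseteq[n]$ is exactly the paper's resolution of identity by the projectors $P^s$, $s\in\{0,1\}^n$, in vectorized form, and your bound $w_S\le d^{-(n-|S|)}\Tr(\rho_S^2)\le d^{-(n-|S|)}$ is the same estimate the paper obtains via $P^1\le \mathbb{I}$ and the overlap with $\ketbra{I}$. The only difference is presentational (superoperator language versus the $E=\sum_i E_i\otimes E_i^*$ vectorization), so there is nothing substantive to add.
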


Combining with Corollary \ref{cor:unital}, we obtain our main theorem in a straightforward manner.
\begin{theorem}
\label{main:theo}
Let $\E$ be a unital channel. Then we have that 
$$\Q(\E)\leq \log(1+d\cdot\lambda_2^2).$$
Furthermore, let $C$ be any codespace of dimension $d_C = (1+d\lambda_2^2)^n(1+\beta)^n$, for some $\beta>0$. Then the average fidelity $\eta$ satisfies the following relation, irrespective of the recovery map: $$\eta^4 \leq \frac{1}{(1+\beta)^n}.$$
\end{theorem}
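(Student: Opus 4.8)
The plan is to combine the two preceding results mechanically, since Theorem~\ref{main:theo} is stated as a straightforward consequence. First I would invoke \cref{cor:unital}, which for a unital channel $\E$ gives the bound
$$\Q(\E)\leq \log\br{d\cdot \mathrm{lim}_{n\rightarrow \infty}\|\E^{\otimes n}\|^{1/n}_2},$$
together with the accompanying fidelity statement: any codespace of dimension $d_C = d^n\|\E^{\otimes n}\|_2(1+\beta)^n$ forces $\eta^4 \leq (1+\beta)^{-n}$. The only remaining ingredient is to control the regularized output $2$-norm appearing inside the logarithm, and for this I would substitute the estimate from \cref{lem:multiplicativebound}, namely $\mathrm{lim}_{n\rightarrow \infty}\|\E^{\otimes n}\|^{1/n}_2 \leq \frac{1}{d}+\lambda_2^2$.

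For the capacity bound, plugging this into the corollary gives
$$\Q(\E)\leq \log\br{d\cdot\br{\tfrac{1}{d}+\lambda_2^2}} = \log\br{1+d\lambda_2^2},$$
which is exactly the claimed inequality, so this direction is immediate. For the second (weak strong-converse) part, I would note that \cref{lem:multiplicativebound} actually gives the non-asymptotic bound $\|\E^{\otimes n}\|_2 \leq (\frac{1}{d}+\lambda_2^2)^n$, so that $d^n\|\E^{\otimes n}\|_2 \leq (1+d\lambda_2^2)^n$. Hence a codespace of dimension $d_C=(1+d\lambda_2^2)^n(1+\beta)^n$ is \emph{at least as large} as a codespace of dimension $d^n\|\E^{\otimes n}\|_2(1+\beta')^n$ for a suitable $\beta'\geq\beta$; feeding this into the fidelity bound of \cref{cor:unital} yields $\eta^4\leq (1+\beta')^{-n}\leq (1+\beta)^{-n}$.

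The step requiring the most care is this last one, since the fidelity bound in \cref{cor:unital} is phrased in terms of the true norm $\|\E^{\otimes n}\|_2$ rather than its upper estimate $(\frac{1}{d}+\lambda_2^2)^n$. I would avoid any subtlety by reapplying \cref{dimofcodespace} directly with $\Pi=I^{\otimes n}$: the lemma states $d_C\leq \eta^{-4}\|\E^{\otimes n}\|_2\,\Tr(\Pi^2)=\eta^{-4}\|\E^{\otimes n}\|_2\, d^n$, and since \cref{lem:multiplicativebound} bounds $\|\E^{\otimes n}\|_2\leq(\frac1d+\lambda_2^2)^n$ we get $d_C\leq \eta^{-4}(1+d\lambda_2^2)^n$. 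Setting $d_C=(1+d\lambda_2^2)^n(1+\beta)^n$ and rearranging then gives $\eta^4\leq (1+\beta)^{-n}$ directly, with no loss. This monotone substitution of the upper bound into a one-sided inequality is the entire content of the proof, so I expect no genuine obstacle beyond tracking the direction of each inequality.
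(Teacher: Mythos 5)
Your proposal is correct and follows exactly the paper's (one-line) proof: combine \cref{cor:unital} with the bound $\|\E^{\otimes n}\|_2\leq(\tfrac1d+\lambda_2^2)^n$ from \cref{lem:multiplicativebound}, so that $d^n\|\E^{\otimes n}\|_2\leq(1+d\lambda_2^2)^n$. Your extra care on the fidelity part --- going back to \cref{dimofcodespace} with $\Pi=I^{\otimes n}$ rather than trying to match the exact codespace dimension in \cref{cor:unital} --- is the right way to handle the one-sided substitution and is exactly what the paper implicitly does.
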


\subsection{Expander channels}

\begin{definition}
\label{def:expander}
A unital quantum channel $\E:A\rightarrow A$ with $k$ Kraus operators $\{E_i\}_{i=1}^k$ and acting on a $d$-dimensional Hilbert space $\H_A$ is said to be a $(C,k,d)$-expander if it holds that $\lambda^2_2(\E) = \frac{C}{k}$.  
\end{definition}

Under this definition, we obtain the following Lemma.
\begin{lemma}
\label{lem:expanders}
Given a channel $\E: A\rightarrow A$ that is a $(C,k,d)$-expander. Then following properties hold for $\E$. 
\begin{itemize}
\item The quantum capacity $\Q(\E)$ is upper bounded by $\log(d)-\log(k) + \log(C + \frac{k}{d})$ and lower bounded by $\log(d)-\log(k)$. 
\item If $\frac{\log(d_C)}{n} = \log(d)-\log(k) + \log((C+\frac{k}{d})(1 + \beta))$, then average fidelity $\eta$ satisfies $\eta^4 < (1+\beta)^{-n}.$
\item  For all $n$, it holds that $\|\E^{\otimes n}\|_2^{1/n} \leq \frac{1}{d} + \frac{C}{k}$ and $\|\E\|_2 \geq \frac{1}{k}$.
\end{itemize}
\end{lemma}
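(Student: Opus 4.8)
For the lower bound $\Q(\E) \ge \log(d) - \log(k)$ I would apply the Lloyd--Shor--Devetak theorem (Fact \ref{LSD}) to the maximally entangled state $\ket{\Psi_{RA}}$ with $R \equiv A$. Then $\Psi_A = I_A/d$, and unitality gives $\E(\Psi_A) = I_A/d$, so $S(\E(\Psi_A)) = \log d$. The second term $S\big((I_R \otimes \E)(\Psi_{RA})\big)$ is the entropy of the (normalized) Choi state of $\E$: dilating $\E$ through a Stinespring isometry with an environment $E$ of dimension at most $k$ (the number of Kraus operators), the global pure state has reduced state $(I_R \otimes \E)(\Psi_{RA})$ on the complement of $E$, so by purity $S\big((I_R \otimes \E)(\Psi_{RA})\big) = S(\rho_E) \le \log k$, where $\rho_E$ is the marginal on $E$. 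Hence the coherent information is at least $\log d - \log k$, and Fact \ref{LSD} delivers the claim.

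Every other item follows by specializing the general unital-channel bounds already proved and substituting the defining relation $\lambda_2^2 = C/k$, with one short direct calculation for the lower bound on $\|\E\|_2$. I would first clear the upper bounds and the fidelity statement. By Theorem \ref{main:theo}, $\Q(\E) \le \log(1 + d\lambda_2^2) = \log(1 + dC/k)$, and the elementary identity $\log(1 + dC/k) = \log(d) - \log(k) + \log(C + k/d)$ gives the stated upper bound. For the fidelity claim, the prescribed codespace dimension satisfies $\log(d_C) = n\log\!\big((1 + dC/k)(1+\beta)\big)$, i.e.\ $d_C = (1 + d\lambda_2^2)^n (1+\beta)^n$, which is exactly the hypothesis of the \emph{furthermore} part of Theorem \ref{main:theo}; that part then yields $\eta^4 \le (1+\beta)^{-n}$, the inequality being strict whenever the multiplicativity bound of Lemma \ref{lem:multiplicativebound} is not saturated. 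Finally, $\|\E^{\otimes n}\|_2^{1/n} \le \frac1d + \frac Ck$ is just Lemma \ref{lem:multiplicativebound} with $\lambda_2^2 = C/k$ substituted; this uses the standing assumption $C < k$, so that $\lambda_2 < 1$ and the lemma applies.

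For $\|\E\|_2 \ge \frac1k$ I would exhibit a good input directly. For any pure $\ket{\psi}$, writing $\E(\psi) = \sum_i E_i \psi E_i^\dagger$ gives $\Tr\!\big((\E(\psi))^2\big) = \sum_{i,j} \abs{\bra{\psi} E_i^\dagger E_j \ket{\psi}}^2 \ge \sum_i p_i^2$, where $p_i \defeq \bra{\psi} E_i^\dagger E_i \ket{\psi} \ge 0$ and the nonnegative off-diagonal terms have been discarded. Trace preservation $\sum_i E_i^\dagger E_i = I$ gives $\sum_i p_i = 1$, and Cauchy--Schwarz over the $k$ indices yields $\sum_i p_i^2 \ge 1/k$; since $\|\E\|_2$ is a maximum over inputs, $\|\E\|_2 \ge 1/k$.

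The bulk of the argument is mechanical substitution into Theorem \ref{main:theo} and Lemma \ref{lem:multiplicativebound}, so the single genuinely non-routine step, and the one I expect to be the main obstacle, is the lower bound on $\Q(\E)$: recognizing that the entropy of the channel's Choi state is controlled by the environment (Kraus) dimension and is therefore at most $\log k$ is precisely what turns the Lloyd--Shor--Devetak expression evaluated on the maximally entangled input into the clean value $\log d - \log k$.
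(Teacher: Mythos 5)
Your proposal is correct and follows essentially the same route as the paper: the upper bound and fidelity claims by direct substitution into Theorem \ref{main:theo}, the lower bound via Fact \ref{LSD} with the maximally mixed input and the observation that the output entropy of the purified state is at most $\log k$ (your Stinespring/environment phrasing is equivalent to the paper's ``convex combination of $k$ pure states'' rank argument), and the $2$-norm bounds from Lemma \ref{lem:multiplicativebound}. Your explicit Cauchy--Schwarz verification of $\|\E\|_2 \ge 1/k$ is just a spelled-out version of the paper's purity-of-a-rank-$k$-state argument, and your remark that $C<k$ is needed for Lemma \ref{lem:multiplicativebound} to apply is a fair (implicit in the paper) observation.
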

\begin{proof}
We prove each item separately.
\begin{itemize}
\item The upper bound on $\Q(\E)$ follows from Theorem \ref{main:theo} and the assumption in Definition \ref{def:expander} that $\lambda^2_2(\E)= \frac{C}{k}$. For the lower bound, we recall from the Lloyd-Shor-Devetak theorem (Fact \ref{LSD}) that $\Q(\E) \geq \text{max}_{\ket{\Psi_{RA}}}(S(\E(\Psi_A))-S(I_R\otimes \E(\Psi_{RA}))).$ Now let $\Psi_A \defeq \frac{I}{d}$. Then $S(\E(\Psi_A)) = S(\Psi_A) = \log(d)$. On the other hand, $S(I_R\otimes\E(\Psi_{RA})) \leq \log(k)$ as $\Psi_{RA}$ is a pure state and $\E$ is composed of $k$ Kraus operators (which means that $\E(\Psi_{RA})$ is a convex combination of $k$ pure states). Hence $\Q(\E) \geq \log(d) -\log(k)$.

\item Second item again follows from Theorem \ref{main:theo}. 
\item For the third item, we observe that $\|\E\|_2 \geq \frac{1}{k}$ for any channel. This follows because $\|\E\|_2 = \mathrm{max}_{\ket{\psi}}\Tr(\E(\psi)^2)$. Now, $\E(\psi)$ is a convex combination of $k$ pure states and hence $\Tr(\E(\psi)^2) > \frac{1}{k}$. This proves the item when combined with Lemma \ref{lem:multiplicativebound}.
\end{itemize}
\end{proof}

A well known example of expander construction is due to Hastings \cite{Has07}, who showed the following theorem. 

\begin{theorem}[\cite{Has07}]
\label{hastingstheorem}
Pick $k/2$ unitary operators $\{U_1,U_2,\ldots U_{k/2}\}$ (each acting on $d$ dimensional Hilbert space) from the Haar measure and construct the quantum channel $\E(\rho) \defeq \frac{1}{k} \sum_i (U_i\rho U_i^{\dagger} + U_i^{\dagger}\rho U_i)$. Then for every $\eps>0$, with probability at least $1- e^{- \eps \cdot d^{2/15}}$, $\E$ is a $(4+4\eps, k, d)$-expander.
\end{theorem}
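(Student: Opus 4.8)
The plan is to translate the spectral claim $\lambda_2^2(\E)\le\frac{4+4\eps}{k}$ into a statement about the operator norm of a random Hermitian matrix, and then estimate its high moments by Weingarten calculus, matching the leading behaviour to the free-probability prediction for a sum of free Haar unitaries. First I would pass to the natural (transfer-matrix) representation of $\E$ on the Hilbert--Schmidt space $\H_A\otimes\H_A$. Writing the Kraus operators of $\E$ as $\frac{1}{\sqrt{k}}U_i$ and $\frac{1}{\sqrt{k}}U_i^{\dagger}$, the superoperator becomes
\[ \Phi \defeq \frac{1}{k}\sum_{i=1}^{k/2}\br{V_i + V_i^{\dagger}}, \qquad V_i \defeq U_i\otimes\overline{U_i}. \]
This $\Phi$ is Hermitian, and the maximally entangled vector $\ket{\Omega}\defeq\frac{1}{\sqrt d}\sum_j\ket{jj}$ (the image of $I/\sqrt{d}$ under vectorization) satisfies $V_i\ket{\Omega}=V_i^{\dagger}\ket{\Omega}=\ket{\Omega}$, hence $\Phi\ket{\Omega}=\ket{\Omega}$. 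Since $\E$ is unital, the definition of $\lambda_2(\E)$ as a maximum over traceless, Frobenius-normalized $\rho$ says precisely that $\lambda_2(\E)$ is the norm of $\Phi$ restricted to $\ket{\Omega}^{\perp}$, so it suffices to control $\norm{M}$ for $M\defeq\Phi-\ketbra{\Omega}$ (which acts as $0$ on $\ket{\Omega}$ and as $\Phi$ on its complement).

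Next I would run the moment method. For an even integer $2p$ one has $\norm{M}^{2p}\le\Tr\br{M^{2p}}$, so Markov's inequality gives $\Pr\Br{\norm{M}>t}\le t^{-2p}\,\mathbb{E}\,\Tr\br{M^{2p}}$. Expanding $M^{2p}$ into a sum of words in the letters $U_i,U_i^{\dagger}$ and taking the Haar expectation term by term reduces everything to integrals of the form $\mathbb{E}\Br{U_{a_1b_1}\cdots\overline{U_{c_1d_1}}\cdots}$, which the Weingarten formula evaluates as sums over pairs of permutations weighted by Weingarten functions. The combinatorial skeleton surviving as $d\to\infty$ is exactly the non-crossing (planar) pairing structure of free probability: the leading part of $\frac{1}{d^2}\mathbb{E}\,\Tr\br{M^{2p}}$ converges to the $2p$-th moment of $\frac{1}{k}\sum_{i=1}^{k/2}\br{w_i+w_i^{\dagger}}$ for free Haar unitaries $w_i$, whose spectral radius is $\frac{2\sqrt{k-1}}{k}$ by the Kesten--Haagerup bound. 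This yields $\mathbb{E}\,\Tr\br{M^{2p}}\lesssim d^2\br{\frac{2\sqrt{k-1}}{k}}^{2p}\mathrm{poly}(p)$ together with corrections suppressed by powers of $1/d$.

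I would then choose $t\defeq\frac{2\sqrt{k-1}}{k}(1+\delta)$ with $\delta$ small enough that $t^2\le\frac{4+4\eps}{k}$; the genuine slack here comes from $\frac{4(k-1)}{k^2}<\frac{4}{k}$, so such $\delta,\eps$ exist. Markov then gives $\Pr\Br{\lambda_2(\E)>t}\lesssim d^2(1+\delta)^{-2p}\,\mathrm{poly}(p)$, and optimizing $p$ as a suitable power of $d$ converts this into a bound of the form $e^{-\eps d^{2/15}}$, which is exactly the claimed estimate.

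The main obstacle is this final optimization, specifically the control of the accumulated error in the Weingarten/genus expansion. Each subleading contribution is individually smaller than the leading one by a factor of order $1/d^2$, but the number of such contributions grows super-exponentially in $p$, so the moment estimate is only trustworthy while $p$ stays below a fixed power of $d$. The whole difficulty---and the source of the unusual exponent $2/15$---is to quantify this trade-off precisely: one must take $p$ large enough that the concentration factor $(1+\delta)^{-2p}$ defeats the dimensional prefactor $d^2$, yet small enough that the genus-expansion remainder remains negligible. Balancing these two competing constraints is what fixes the largest admissible $p\sim d^{2/15}$ and hence the stated failure probability.
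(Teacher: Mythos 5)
The paper does not actually prove this statement: Theorem \ref{hastingstheorem} is imported verbatim from \cite{Has07} and used as a black box, so there is no in-paper argument to compare yours against. Judged on its own terms, your reduction is correct and is in fact the skeleton of Hastings' actual argument: vectorizing the channel to the Hermitian transfer matrix $\Phi=\frac{1}{k}\sum_i\br{U_i\otimes\overline{U_i}+U_i^{\dagger}\otimes\overline{U_i^{\dagger}}}$, observing that $\lambda_2(\E)=\norm{\Phi-\ketbra{\Omega}}$, bounding this norm through $\mathbb{E}\,\Tr\br{M^{2p}}$ via Markov, and matching the leading planar terms of the Weingarten expansion to the moments of $\frac{1}{k}\sum_i\br{w_i+w_i^{-1}}$ in the free group factor, whose norm $\frac{2\sqrt{k-1}}{k}$ leaves exactly the slack $\frac{4(k-1)}{k^2}<\frac{4+4\eps}{k}$ that you exploit. (You also implicitly, and sensibly, read Definition \ref{def:expander} as the inequality $\lambda_2^2(\E)\leq C/k$ rather than the equality the paper literally writes; the theorem only makes sense that way.)

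As a proof, however, the proposal has a genuine gap, and it is the one you name yourself: all of the quantitative content lives in controlling the non-planar remainder of the genus expansion. Saying that each subleading term is smaller by $1/d^2$ while their number grows super-exponentially in $p$ states the problem rather than solving it; without an explicit estimate of the form $\mathbb{E}\,\Tr\br{M^{2p}}\leq d^2\br{\frac{2\sqrt{k-1}}{k}}^{2p}\br{1+\mathrm{err}(p,d)}$ with $\mathrm{err}(p,d)$ controlled for $p$ up to an explicit power of $d$, the final optimization cannot be run, and the exponent $2/15$ cannot be derived --- that exponent is precisely the output of the missing computation, not an input to it. So what you have is a faithful road map to Hastings' theorem rather than a proof of it; completing it requires carrying out the diagrammatic bookkeeping in detail, or else citing \cite{Has07} for it, which is all the paper itself does.
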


Combining this with Lemma \ref{lem:expanders}, we obtain the following straightforward corollary. The third item below is similar in spirit to the result in \cite{Montanaro13}.

\begin{cor}
\label{cor:hastingsexpanders}
Consider a random channel $\E$ as constructed in Theorem \ref{hastingstheorem}. Then for every $\eps>0$, setting $d> \frac{k}{\eps}$, the following holds with probability at least $1-e^{-\eps \cdot d^{2/15}}$ . 
\begin{itemize}
\item The quantum capacity $\Q(\E)$ is upper bounded by $\log(d)- \log(k) + \log(4+5\eps)$ and lower bounded by $\log(d)-\log(k)$. 
\item If $\frac{\log(d_C)}{n} = \log(d)-\log(k)+\log((4+5\eps)(1 + \beta))$, then average fidelity $\eta$ decays as $\eta^4 < (1+\beta)^{-n}.$  
\item $\|\E^{\otimes n}\|_2 \leq \|\E\|_2^{n(1 + \frac{4}{\log(k)} )}$.
\end{itemize}
\end{cor}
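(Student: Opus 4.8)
The plan is to read off all three items from Lemma~\ref{lem:expanders} after using Theorem~\ref{hastingstheorem} to pin down the expander constant. First I would condition on the event that $\E$ is a $(4+4\eps,k,d)$-expander; by Theorem~\ref{hastingstheorem} this event has probability at least $1-e^{-\eps\cdot d^{2/15}}$, and on it we may set $C=4+4\eps$ and $\lambda_2^2(\E)=\frac{C}{k}=\frac{4+4\eps}{k}$. Everything else is deterministic manipulation of the bounds in Lemma~\ref{lem:expanders} under the standing hypothesis $d>\frac{k}{\eps}$, which I would record in the convenient form $\frac{k}{d}<\eps$.

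For the first item I would substitute $C=4+4\eps$ into the expander bound $\Q(\E)\leq\log(d)-\log(k)+\log\br{C+\frac{k}{d}}$ and use $\frac{k}{d}<\eps$ to get $C+\frac{k}{d}<4+4\eps+\eps=4+5\eps$, hence $\log\br{C+\frac{k}{d}}<\log(4+5\eps)$; the lower bound $\log(d)-\log(k)$ is inherited verbatim. The second item is the same substitution applied to the fidelity statement: writing the codespace dimension in the stated form $\frac{\log(d_C)}{n}=\log(d)-\log(k)+\log\br{(4+5\eps)(1+\beta)}$ and comparing it with the form $\log(d)-\log(k)+\log\br{(C+\frac{k}{d})(1+\beta')}$ demanded by Lemma~\ref{lem:expanders}, the inequality $C+\frac{k}{d}<4+5\eps$ forces $1+\beta'=\frac{(4+5\eps)(1+\beta)}{C+k/d}>1+\beta$. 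Since Lemma~\ref{lem:expanders} yields $\eta^4<(1+\beta')^{-n}$ and $\beta'>\beta$, the claimed $\eta^4<(1+\beta)^{-n}$ follows.

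The third item is where the real work sits. Here I would start from the two deterministic estimates of Lemma~\ref{lem:expanders}, namely $\|\E^{\otimes n}\|_2^{1/n}\leq\frac{1}{d}+\frac{C}{k}\leq\frac{4+5\eps}{k}$ (again using $\frac{k}{d}<\eps$) and $\|\E\|_2\geq\frac{1}{k}$, and convert these additive $\frac{1}{k}$-type bounds into a multiplicative exponent on $\|\E\|_2$. The cleanest route is to factor $\frac{4+5\eps}{k}=(4+5\eps)\cdot\frac{1}{k}\leq(4+5\eps)\|\E\|_2$, so that $\|\E^{\otimes n}\|_2\leq\br{(4+5\eps)\|\E\|_2}^n$, and then to absorb the constant $4+5\eps$ into a power of $\|\E\|_2$ by writing it as $\|\E\|_2$ raised to $\frac{\log(4+5\eps)}{\log\|\E\|_2}$. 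Feeding in the bound $\log\frac{1}{\|\E\|_2}\leq\log(k)$ coming from $\|\E\|_2\geq\frac{1}{k}$, together with the crude estimate $\log(4+5\eps)\leq 4$ valid on the relevant range of $\eps$, should collapse the per-copy exponent into the stated $1+\frac{4}{\log(k)}$ form.

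The main obstacle I anticipate is precisely this last bookkeeping step. Because the maximal output $2$-norm is supermultiplicative, $\|\E^{\otimes n}\|_2\geq\|\E\|_2^n$, and $\|\E\|_2<1$, the resulting exponent is extremely sensitive to the direction of every inequality and to the sign of $\log\|\E\|_2$. I would therefore be careful to use $\|\E\|_2\geq\frac{1}{k}$ (equivalently $\log\frac{1}{\|\E\|_2}\leq\log(k)$) on exactly the side where it drives the exponent toward $1$, rather than the opposite estimate $\|\E\|_2\leq\frac{4+5\eps}{k}$, and to keep the factor $(4+5\eps)$ on the correct side throughout. By contrast, items one and two are routine substitutions once the expander constant has been fixed by Theorem~\ref{hastingstheorem}.
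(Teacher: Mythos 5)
Your overall route is exactly the intended one: the paper offers no explicit proof of this corollary, stating only that it follows by ``combining'' Theorem \ref{hastingstheorem} with Lemma \ref{lem:expanders}, and your plan --- condition on the event that $\E$ is a $(4+4\eps,k,d)$-expander, set $C=4+4\eps$, and use $d>\frac{k}{\eps}$ in the form $\frac{k}{d}<\eps$ to replace $C+\frac{k}{d}$ by $4+5\eps$ --- is precisely that combination. Your treatment of the first two items is correct, including the small but necessary observation that enlarging $C+\frac{k}{d}$ to $4+5\eps$ corresponds to a larger $\beta'>\beta$ in Lemma \ref{lem:expanders}, so the fidelity bound only improves.

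The third item, however, contains a genuine gap that you half-notice but do not resolve. As you yourself point out, $\|\E^{\otimes n}\|_2\geq\|\E\|_2^n$ by taking tensor-power input states, and here $\|\E\|_2\leq\frac{1}{d}+\frac{C}{k}<1$; hence $\|\E\|_2^{n(1+\frac{4}{\log k})}<\|\E\|_2^n\leq\|\E^{\otimes n}\|_2$, and the inequality with exponent $1+\frac{4}{\log k}$ \emph{cannot} be proved --- it is false for these channels, and is evidently a sign error in the statement. Your proposed computation, carried out carefully, lands on the other sign: from $\|\E^{\otimes n}\|_2\leq\br{\frac{4+5\eps}{k}}^n$ and $\|\E\|_2\geq\frac{1}{k}$ one needs $\alpha\leq\frac{\log\br{k/(4+5\eps)}}{\log\br{1/\|\E\|_2}}\,$, and since $\log\br{1/\|\E\|_2}\leq\log(k)$ this threshold is at least $1-\frac{\log(4+5\eps)}{\log(k)}\geq 1-\frac{4}{\log(k)}$ (the last step using $\log(4+5\eps)\leq 4$, i.e.\ $\eps\leq 12/5$, a restriction worth recording). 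So the conclusion your argument actually supports is $\|\E^{\otimes n}\|_2\leq\|\E\|_2^{n\br{1-\frac{4}{\log(k)}}}$. Promising to ``collapse the exponent into the stated $1+\frac{4}{\log(k)}$ form'' is therefore not a bookkeeping risk but an impossibility; the correct move is to prove the $1-\frac{4}{\log(k)}$ version and flag the discrepancy with the statement.
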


\section{Acknowledgements}
I would like to thank Aram Harrow, Ashley Montanaro, Andreas Winter, Debbie Leung, Jamie Sikora and Mark M. Wilde for helpful discussions. I am grateful to Andreas Winter for pointing out that Lemma \ref{zeroerror} follows from the arguments presented in Lemma \ref{dimofcodespace}. This work is supported by the Core Grants of Centre for Quantum Technologies, the National Research Foundation, Prime Minister's Office, Singapore and the
Ministry of Education, Singapore under the Research Centres of Excellence programme. 

\bibliographystyle{alpha}
\bibliography{references}

\appendix

\section{Proof of Lemma \ref{barnumknill}}
\label{proof:petzoptimal}
\begin{proof}
Let $\{R_i\}, \{E_i\}$ be respective Kraus operators for $\R$ and $\E$. That is, $\R(\rho) = \sum_k R_k\rho R^{\dagger}_k$ and similarly for $\E$. Then we have $\F^2(\psi,\R(\E(\psi))) = \sum_{i,j}|\bra{\psi}R_jE_i\ket{\psi}|^2$. Consider the matrix $X_{ij}\defeq \bra{\psi}R_jE_i\ket{\psi}$. By singular-value decomposition, there 
 exist unitaries $U,V$ with respective entries $\{u_{k,i}\}_{k,i}, \{v_{l,j}\}_{l,j}$ such that $Y_{k,l} \defeq \sum_{i,j}X_{ij}u_{k,i}v_{l,j}$ is a diagonal matrix and $\sum_k |Y_k|^2 = \sum_{i,j}|X_{ij}|^2$. Let $E'_k\defeq \sum_i u_{k,i}E_i$ and $R'_l\defeq \sum_j v_{l,j}R_j$ be new Kraus operators for $\R$ and $\E$ respectively. Then we have that 
\begin{align*}
\F^2(\psi,\R(\E(\psi))) &= \sum_k |Y_k|^2 = \sum_{k}|\bra{\psi}R'_kE'_k\ket{\psi}|^2\\
 & = \sum_k|\Tr(\psi R'_kE'_k\psi)|^2 = \sum_k|\Tr(\psi R'_k\Pi^{1/2}\Pi^{-1/2}E'_k\psi)|^2 \\ 
& (\mbox{as }\Pi \mbox{ is fully supported in the image of }\E)\\ 
& \leq \sum_k\Tr(\psi R'_k\Pi R'^{\dagger}_k)\Tr(\Pi^{-1}E'_k\psi E'^{\dagger}_k) \\ 
& (\mbox{using Cauchy-Schwartz inequality})\\ 
& \leq \sqrt{\sum_k|\bra{\psi}R'_k\Pi R'^{\dagger}_k\ket{\psi}|^2}\sqrt{\sum_k |\bra{\psi}E'^{\dagger}_k\Pi^{-1}E'_k\ket{\psi}|^2}\\ 
& \leq \sqrt{\sum_k\bra{\psi}R'_k\Pi^2R'^{\dagger}_k\ket{\psi}}\sqrt{\sum_k |\bra{\psi}E'^{\dagger}_k\Pi^{-1}E'_k\ket{\psi}|^2}\\
& =\sqrt{\bra{\psi}\R(\Pi^2)\ket{\psi}}\sqrt{\sum_k |\bra{\psi}E'^{\dagger}_k\Pi^{-1}E'_k\ket{\psi}|^2}\\
& \leq \sqrt{\bra{\psi}\R(\Pi^2)\ket{\psi}}\sqrt{\sum_{k,l} |\bra{\psi}E'^{\dagger}_k\Pi^{-1}E'_l\ket{\psi}|^2} =\sqrt{\bra{\psi}\R(\Pi^2)\ket{\psi}}\sqrt{\bra{\psi}\P_{\Pi}\E(\psi)\ket{\psi}}
\end{align*}

We explain the second last inequality, which says that $\sum_k|\bra{\psi}R'_k\Pi R'^{\dagger}_k\ket{\psi}|^2 \leq \sum_k\bra{\psi}R'_k\Pi^2R'^{\dagger}_k\ket{\psi}$. Consider
$$\sum_k|\bra{\psi}R'_k\Pi R'^{\dagger}_k\ket{\psi}|^2 \leq \sum_{k,l}|\bra{\psi}R'_k\Pi R'^{\dagger}_l\ket{\psi}|^2 = \sum_{k,l}\bra{\psi}R'_k\Pi R'^{\dagger}_l\ket{\psi}\bra{\psi}R'_l\Pi R'^{\dagger}_k\ket{\psi}.$$ Let $\ket{\phi_k} \defeq \Pi R'^{\dagger}_k\ket{\psi}$. Observe that $\sum_l R'^{\dagger}_l\ketbra{\psi}R'_l < \sum_l R'^{\dagger}_lR_l = I$, since $I-\ketbra{\psi}$ is a positive semidefinite operator and hence $R'^{\dagger}_l(I-\ketbra{\psi})R'_l$ is a positive semidefinite operator. This implies 
$$\sum_k|\bra{\psi}R'_k\Pi R'^{\dagger}_k\ket{\psi}|^2 \leq \sum_{k}\bra{\phi_k}(\sum_l R'^{\dagger}_l\ket{\psi}\bra{\psi}R'_l)\ket{\phi_k} \leq \sum_k \bra{\phi_k}\ket{\phi_k} = \sum_k \bra{\psi}R'_k\Pi^2R'^{\dagger}_k\ket{\psi}.$$

This completes the proof.
\end{proof}

\section{Proof of Lemma \ref{lem:multiplicativebound}}
\label{2-normmultiplicativity}
\begin{proof}
We consider the mapping $$\ket{i}\bra{j} \rightarrow \ket{i}\ket{j}.$$ Under this mapping, a matrix $A=\sum_{i,j}a_{ij}\ket{i}\bra{j}$ goes to a `vector' $\ket{A} \defeq a_{ij}\ket{i}\ket{j}$ and a rank-$1$ state $\ketbra{\phi}$ goes to $\ket{\phi}\ket{\phi^{*}}$.  The inner product becomes $\braket{B}{A} =\sum_{ij} b^{*}_{ij}a_{ij} = \Tr(B^{\dag}A)$ which is the usual Hilbert-Schmidt inner product. The channel $\E$ gets mapped to the matrix $E = \sum_i E_i \otimes E^{*}_i$. 

The fact that $I$ is a fixed point of $\E$ implies that for $\ket{I}=\sum_{i=1}^d \ket{i}\ket{i}$, we have $E\ket{I}=\ket{I}$. Second largest singular value of $E$ (which we call $\lambda_2$) is the second largest eigenvalue of $\sqrt{E^{\dagger}E}$. Let $P^0\defeq \frac{1}{d}\ketbra{I}$ be projector onto the vector $\ketbra{I}$ (it is easy to check that $P^0\cdot P^0 = \frac{1}{d^2}\ketbra{I}\ketbra{I} = \frac{1}{d}\ketbra{I}$) and $P^1 \defeq \mathbb{I}- P^0$ be projector onto subspace orthogonal to $\ketbra{I}$. Then we have the following relations 
\begin{equation}
\label{matrix_inequalities}
P^1E^{\dagger}EP^1 < \lambda_2^2 P^1, P^0E^{\dagger}EP^0=P^0, P^1E^{\dagger}EP^0= 0.
\end{equation}

Consider the quantity $\|\E^{\otimes n}\|_2$ and recall that the optimisation in its definition is achieved by a pure state (Fact \ref{pureopt}). Let the optimal pure state be $\ket{\phi}$. We note that the state $\E^{\otimes n}(\ketbra{\phi})$ gets mapped to the vector $E^{\otimes n}\ket{\phi}\ket{\phi^*}$. 

Thus, we have $\|\E^{\otimes n}\|_2 = \text{max}_{\phi}\bra{\phi}\bra{\phi^*}(E^{\dag}E)^{\otimes n}\ket{\phi}\ket{\phi^*}$. For a string $s\in \{0,1\}^n$, define $P^s \defeq P^{s_0}\otimes P^{s_1}\otimes \ldots P^{s_n}$. This implies 
\begin{eqnarray*}
\bra{\phi}\bra{\phi^*}(E^{\dag}E)^{\otimes n}\ket{\phi}\ket{\phi^*}&=& \sum_{s,s'\in \{0,1\}^n} \bra{\phi}\bra{\phi^*}P^{s'}(E^{\dag}E)^{\otimes n}P^{s'}\ket{\phi}\ket{\phi^*}\quad \text{(Resolution of Identity)}\\ &=& \sum_{s\in \{0,1\}^n} \bra{\phi}\bra{\phi^*}P^{s}(E^{\dag}E)^{\otimes n}P^s\ket{\phi}\ket{\phi^*}\quad (\text{as } P^1E^{\dagger}EP^0= 0)\\ &\leq & \sum_{s\in\{0,1\}^n}\lambda_2^{2|s|} \bra{\phi}\bra{\phi^*}P^{s}\ket{\phi}\ket{\phi^*}\quad (\text{Equation  \ref{matrix_inequalities}})\\&\leq & \sum_{s\in\{0,1\}^n}\lambda_2^{2|s|}\frac{1}{d^{n-|s|}} \bra{\phi}\bra{\phi^*}\otimes_{i:s_i=0}\ketbra{I}_i\ket{\phi}\ket{\phi^*} \quad (P^1 < \mathbb{I} \text{ , } P^0 = \frac{1}{d}\ketbra{I}) \\&=& \sum_{s\in\{0,1\}^n}\lambda_2^{2|s|}\frac{1}{d^{n-|s|}} \Tr(\otimes_{i:s_i=0}\ketbra{I}_i\cdot\ketbra{\phi}\otimes\ketbra{\phi^*}) 
\end{eqnarray*}

 Now fix an $s$ and let $J_s$ be set of qudits on which $s_i=0$. Let $\bar{J}_s$ be rest of the qudits. Let $\ket{I}_{J_s}\defeq \otimes_{i\in J_s}\ket{I}_i = \sum_{t \in \{1,2\ldots d\}^{|J_s|}}\ket{t}_{J_s}\ket{t}_{J_s}$ be the maximally entangled unnormalized state on qudits in $J_s$.  Let $\rho= \Tr_{\bar{J}_s}\ketbra{\phi}$. Then 
$$\Tr(\otimes_{i:s_i=0}\ketbra{I}_i\cdot\ketbra{\phi}\otimes\ketbra{\phi^{*}}) = \bra{I}_{J_s}\rho\otimes \rho^{*} \ket{I}_{J_s} = \sum_{t,t'}\bra{t}\rho\ket{t'}\bra{t}\rho^{*}\ket{t'}$$$$=\sum_{t,t'}\bra{t}\rho\ket{t'}\bra{t'}\rho\ket{t} = \Tr(\rho^2) < 1.$$
The last inequality follows since $\rho$ is a quantum state.

This gives $$\|\E^{\otimes n}\|_2 \leq  \sum_{s\in\{0,1\}^n}\lambda_2^{2|s|}\frac{1}{d^{n-|s|}} = \sum_{|s|=0}^n{n \choose |s|}\lambda_2^{2|s|}\frac{1}{d^{n-|s|}} = (\frac{1}{d}+\lambda_2^2)^n.$$

\end{proof}

\end{document}